\documentclass{llncs}
\usepackage{graphicx}
\usepackage{url}
\usepackage{amsmath}
\usepackage{amssymb}
\usepackage{amsfonts}
\usepackage{subfigure}
\usepackage{enumerate}
\usepackage{paralist}
\usepackage[textsize=footnotesize]{todonotes}

\usepackage{rotating}
\usepackage{amsthm}
\usepackage{times}
\usepackage{enumitem}
\usepackage{stmaryrd}
\usepackage{wrapfig}

\graphicspath{{fig/}}

\makeatletter
\def\cramped                           
 {
\parskip0pt\@topsep0pt       
  \itemsep0pt\parsep0pt
  \settowidth{\labelwidth}{\@itemlabel}
  \advance\leftmargin-\labelsep
  \advance\leftmargin-\labelwidth
  \parshape1\@totalleftmargin\linewidth
}
\makeatother



\newcommand{\wleft}{\mathit{left}}
\newcommand{\wright}{\mathit{right}}
\newcommand{\LSVR}{$\Phi$-LSVR}
\newcommand{\phit}{$\Phi$-transversal}
\newcommand{\SA}{\includegraphics{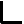}~}
\newcommand{\SB}{\includegraphics{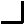}~}
\newcommand{\SC}{\includegraphics{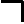}~}
\newcommand{\SD}{\includegraphics{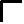}~}

\newcommand{\face}[2]{f^{#1}_{#2}}

\newcommand{\rface}[1]{\face{}{r}(#1)}
\newcommand{\bface}[1]{\face{}{b}(#1)}

\newcommand{\algo}{\texttt{$\Phi$LSVRDrawer}}

\title{Simultaneous Visibility Representations \\ of Plane $st$-graphs
  Using L-shapes \thanks{Research supported in part by the MIUR
    project AMANDA ``Algorithmics for MAssive and Networked DAta'',
    prot. 2012C4E3KT\_001 and NSERC Canada.}}
\author{
William S. Evans\inst{1} \and
Giuseppe Liotta\inst{2} \and
Fabrizio Montecchiani\inst{2}
}
\institute{
University of British Columbia, Canada\\
\email{will@cs.ubc.ca}
\and
Universit{\`a} degli Studi di Perugia, Italy\\
\email{\{giuseppe.liotta,fabrizio.montecchiani\}@unipg.it}
}

\begin{document}

\pagenumbering{arabic}
\pagestyle{plain}

\maketitle

\begin{abstract}
Let $\langle G_r,G_b \rangle$ be a pair of plane $st$-graphs with the
same vertex set $V$.
A simultaneous visibility representation with 
L-shapes of $\langle G_r,G_b \rangle$ is a pair
of bar visibility representations $\langle\Gamma_r,\Gamma_b\rangle$ such
that, for every vertex $v \in V$, $\Gamma_r(v)$ and $\Gamma_b(v)$ are 
a horizontal and a vertical segment, which share an end-point. 
In other words, every vertex is drawn as an $L$-shape,  
 every edge of $G_r$ is a vertical visibility segment, and 
every edge of $G_b$ is a horizontal visibility segment. 
Also, no two L-shapes intersect each other. 
An L-shape has four possible rotations, 
and we assume that each vertex is given a 
rotation for its $L$-shape as part of the input. Our main results are:
(i) a characterization of those pairs of plane $st$-graphs admitting
such a representation, (ii) a cubic time
algorithm to recognize them, and (iii) a linear time drawing
algorithm if the test is positive. 
\end{abstract}

\section{Introduction}

Let $G_r$ and $G_b$ be two plane graphs with the same vertex set.
A {\em simultaneous embedding (SE)} of $\langle G_r,G_b \rangle$ consists of two 
planar drawings, $\Gamma_r$ of $G_r$ and $\Gamma_b$ of
$G_b$, such that  every edge is a simple Jordan arc, and 
every vertex is the same point both in $\Gamma_r$ and in $\Gamma_b$.
The problem of computing SEs has received a lot
of attention in the Graph Drawing literature,  
partly for  its theoretical interest and partly for its application
to the visual analysis of dynamically changing networks on a common
(sub)set of vertices. For example, it is known that any two plane
graphs with the same vertex set admit a SE
where the edges are polylines with at most two bends, which are sometimes 
necessary~\cite{DBLP:journals/ijcga/GiacomoL07}. If the edges are straight-line
segments, the representation is called a \emph{simultaneous
  geometric embedding (SGE)},
and many graph pairs do not have an 
SGE: a tree and a
path~\cite{DBLP:journals/jgaa/AngeliniGKN12}, a planar graph and a matching~\cite{DBLP:journals/jgaa/CabelloKLMSV11},
and three paths~\cite{DBLP:journals/comgeo/BrassCDEEIKLM07}.
On the positive side, the discovery of graph
 pairs that have an SGE is still a fertile research topic.
The reader can refer to the survey by Bl\"asius, Kobourov and Rutter~\cite{Handbook} for references and open problems.

Only a few papers study simultaneous representations that adopt a drawing
paradigm different from SE and SGE. A seminal paper by Jampani and
Lubiw initiates the study of {\em simultaneous intersection
representations (SIR)}~\cite{DBLP:journals/jgaa/JampaniL12}. In an intersection representation of a graph, 
each vertex is a geometric object and there is an edge
between two vertices if and only if the corresponding objects intersect. 
Let $\langle G_r,G_b \rangle$  be two graphs that have a
subgraph in common. 
A SIR of $\langle G_r,G_b \rangle$ is a pair of
intersection representations where each vertex in $G_r \cap G_b$ is
mapped to the same object in both realizations. 
Polynomial-time algorithms for testing the existence of SIRs for chordal, comparability, interval, and permutation graphs have been presented~\cite{DBLP:conf/soda/BlasiusR13,DBLP:conf/isaac/JampaniL10,DBLP:journals/jgaa/JampaniL12}. 

We introduce and study a different type of
simultaneous representation, where each graph is realized as a \emph{bar visibility representation} and two
segments representing the same vertex share an end-point. 
A bar visibility representation of a plane graph $G$ is an embedding preserving 
drawing $\Gamma$ where the vertices of $G$ are non-overlapping horizontal segments, 
and two segments are joined by a vertical visibility segment if and only if there exists an edge in $G$ between
the two corresponding vertices (see,
e.g.,~\cite{DBLP:journals/dcg/RosenstiehlT86,TamassiaTollis86}).
A visibility segment has thickness $\epsilon >0$ and does not intersect any other segment.


A {\em simultaneous visibility
representation with L-shapes} of $\langle G_r,G_b \rangle$ is a pair
of bar visibility representations $\langle\Gamma_r,\Gamma_b\rangle$ such
that for every vertex $v \in V$,
$\Gamma_r(v)$ and $\Gamma_b(v)$ are a horizontal and a vertical segment that share an end-point. 
In other words, every vertex is an L-shape, and 
every edge of $G_r$ (resp., $G_b$) is a vertical (resp., horizontal) visibility segment.
Also, no two L-shapes intersect. 
A simultaneous visibility representation with $L$-shapes of $\langle G_r,G_b \rangle$  
where the rotation of the $L$-shape of each vertex in $V$ is defined by a function
$\Phi:V \rightarrow \mathcal H=\{\SA,\SB,\SC,\SD\}$, is called a \emph{\LSVR{}} in the following.
While this definition does not assume any particular direction on
the edges of $G_r$ (resp., $G_b$), the resulting representation does induce
a bottom-to-top (resp., left-to-right) $st$-orientation.
In this paper, we assume that $G_r$ and $G_b$ are directed and this
direction must be preserved in the visibility representation.
Also, the two graphs have been augmented with distinct (dummy) sources and sinks. 
More formally, $G_r$=$(V \cup \{s_r,t_r\}, E_r)$ and
$G_b$=$(V \cup \{s_b,t_b\}, E_b)$ are two plane $st$-graphs with 
sources $s_r$, $s_b$, and sinks $t_r$, $t_b$. 

In terms of readability, this kind of simultaneous representation has the following advantages: $(i)$ The edges are depicted as straight-line segments (as in SGE) and the edge-crossings are rectilinear; $(ii)$ The edges of the two graphs are easy to distinguish, since they consistently flow from bottom to top for one graph and from left to right for the other graph. Having rectilinear crossing edges is an important benefit in terms of readability, as shown in~\cite{DBLP:conf/apvis/HuangHE08}, which motivated a relevant amount of research on right-angle crossing (RAC) drawings, see~\cite{dl-cargd-12} for a survey.

Our main contribution is summarized by the following theorem.

\begin{theorem}\label{th:mainth}
Let $G_r$ and $G_b$ be two 
plane $st$-graphs defined on the same set
of $n$ vertices $V$ and with distinct sources and sinks.  Let $\Phi:V
\rightarrow \mathcal H=\{\SA,\SB,\SC,\SD \}$. There exists an
$O(n^3)$-time algorithm to test whether $\langle G_r, G_b \rangle$
admits a \LSVR{}. Also, in the positive case, a \LSVR{} can be computed in $O(n)$ time.
\end{theorem}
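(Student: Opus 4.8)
The plan is to split the geometric problem into a purely combinatorial \emph{consistency} test followed by a coordinate assignment, mirroring how a single plane $st$-graph is turned into a bar visibility representation. First observe that any \LSVR{} induces, by projection, two coordinate functions: the common $y$-coordinate $Y(v)$ of the horizontal segment $\Gamma_r(v)$ and the common $x$-coordinate $X(v)$ of the vertical segment $\Gamma_b(v)$. Since every edge of $G_r$ is an upward vertical visibility, $Y$ is constant on each horizontal bar and strictly increases along every $G_r$-edge, so it induces a linear extension (topological numbering) of $G_r$; symmetrically $X$ induces a topological numbering of $G_b$. The corner of the L-shape of $v$ then sits at $(X(v),Y(v))$, and the fixed rotation $\Phi(v)$ prescribes into which quadrant the two arms extend. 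The only remaining freedom is the relative order of the corners along each axis beyond what $G_r$ and $G_b$ already force, together with the arm lengths, and the whole difficulty is to decide whether these can be fixed so that no two L-shapes overlap and every required visibility is unobstructed.

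My first step is to encode all obstructions as ordering constraints between pairs of vertices, collected in an auxiliary digraph $D$ on $V$ carrying two kinds of arcs: one refining the $x$-order (hence $G_b$) and one refining the $y$-order (hence $G_r$). The \emph{unconditional} arcs come from the edges of $G_r$ and $G_b$ and from pairs of rotations whose arms are forced to avoid each other regardless of lengths. The delicate arcs are the \emph{blocking} constraints: a vertical arm at abscissa $X(u)$ can interrupt the vertical sightline realizing a $G_r$-edge (dually for horizontal arms and $G_b$) exactly when $u$ lies, in the orthogonal order, strictly between the two endpoints while overlapping them in the relevant coordinate; each such potential block must be ruled out by committing to one of two orderings, recorded as an implication. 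Because both graphs are plane $st$-graphs, the pairs that can possibly block are governed by the face structure, and I expect $O(n^2)$ such arcs. I claim that $\langle G_r,G_b\rangle$ admits a \LSVR{} if and only if these constraints are jointly satisfiable, i.e. $D$ contains no contradictory cycle mixing the two axes; this satisfiable state is what I anticipate the paper formalizes as a \phit{}.

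The forward (necessity) direction is immediate from the projection argument, since a drawing supplies a consistent orientation of $D$. Sufficiency is where I expect the main obstacle to lie. Given a consistent $D$, I would take $Y$ to be a topological numbering of the digraph obtained by adding the $y$-arcs of $D$ to $G_r$, take $X$ analogously from $G_b$, place each corner at $(X(v),Y(v))$, and then extend each arm toward the coordinate dictated by the extreme incident face in the corresponding dual $st$-graph. The crux is to prove that this realizes \emph{exactly} the edges of $G_r$ and $G_b$ as unobstructed visibilities, and that no two L-shapes meet: one must show that any hypothetical overlap or spurious/blocked visibility would exhibit a pair of vertices violating a recorded constraint, contradicting acyclicity. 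The planarity of the two embeddings is precisely what should promote these \emph{pairwise} constraints to a \emph{global} realization, because visibilities in an $st$-graph are determined locally by its faces; establishing this ``local-to-global'' passage rigorously is the heart of the argument.

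For the running times, recognition (claim (ii)) amounts to building $D$ explicitly—its $O(n^2)$ arcs are enumerated by scanning, for each edge that needs a sightline, the vertices that could block it—and testing consistency; computing the transitive closure of a digraph with $\Theta(n^2)$ arcs, or equivalently running one reachability search from each of the $n$ vertices, costs $O(n^3)$. For drawing (claim (iii)), once consistency is certified the representation must be produced without revisiting the quadratic constraint set: a topological numbering of each primal $st$-graph and of each dual $st$-graph, all linear-time, yields the corner coordinates and arm extents directly, giving all $n$ L-shapes in $O(n)$ time. The one point requiring care is to make the recognition step emit a compact certificate—the refined $st$-orientations of $G_r$ and $G_b$—so that the linear-time drawer never needs the full digraph $D$.
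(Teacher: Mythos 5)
Your high-level plan (combinatorialize, then assign coordinates from topological numberings) is in the right spirit, but it has two genuine gaps, and the first one is fatal as stated. Your ``blocking'' constraints are disjunctive: you say each potential obstruction ``must be ruled out by committing to one of two orderings, recorded as an implication.'' Yet you then test satisfiability by checking that a single fixed digraph $D$ has ``no contradictory cycle.'' Acyclicity of one digraph cannot decide a system that still contains binary choices; you would need to resolve the disjunctions (2-SAT over ordering atoms, or worse), and mixed systems of precedence constraints with disjunctions are not in general decidable by reachability. You never explain how the choices get resolved, and this is precisely where the paper does its real work: it proves (Lemma~\ref{le:uniquetp}) that, given the two embeddings and $\Phi$, every blue edge has a \emph{unique} admissible route (``traversing path'') through the faces of the red graph, pinned down by conditions {\bf p1}--{\bf p6}. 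That uniqueness eliminates all disjunctive freedom, so testing reduces to (i) checking each traversing path exists and (ii) checking the unique routes are pairwise non-crossing, both done by reachability queries in $O(n^3)$ total. Relatedly, your claimed equivalence (``\LSVR{} exists iff $D$ is consistent'') is asserted, not proved, and your guess that the consistent state is the paper's \phit{} is off the mark: a \phit{} is a drawing of the union multigraph with rotation and crossing conditions ({\bf c1}--{\bf c3}), not an ordering digraph.

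The second gap is the sufficiency direction, which you yourself flag as ``the heart of the argument'' and leave open. The paper closes it constructively (Theorem~\ref{th:4polar}): the drawer sets the $y$-coordinate of $u$'s horizontal bar to $n^*_b(\bface{u}) + n_r(u)\delta$, i.e., a topological numbering of the \emph{blue dual} evaluated at the blue face selected by $\Phi(u)$, plus a $\delta$-perturbation by the red primal numbering, and symmetrically for the vertical bars. This primal/dual cross-coupling is what forces the two bars of the same vertex to meet at a corner, and Lemma~\ref{le:validstxy} (using {\bf c3} to extract a blue-dual path along the blue edges crossed by a red edge) shows the coordinates are still valid topological numberings, while Lemma~\ref{le:lshapes} (using the four-region partition of Lemma~\ref{le:tt} together with {\bf c2}/{\bf c3}) rules out intersecting L-shapes. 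Your recipe---number $G_r$ augmented with the $y$-arcs of $D$, number $G_b$ with the $x$-arcs, extend arms to dual extremes---has no mechanism guaranteeing that the horizontal and vertical segments of a vertex share an endpoint, nor that required visibilities are unobstructed and spurious ones absent; absent a lemma playing the role of Lemma~\ref{le:lshapes}, the local-to-global passage you defer is exactly the missing content. The $O(n)$ drawing bound you claim is fine, but only once such a construction is actually specified and proved correct.
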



{\noindent}This result relates to previous studies on topological rectangle visibility graphs~\cite{StreinuWhitesides03} and transversal structures (see, e.g.,~\cite{f-rsrpg-13,DBLP:journals/dm/Fusy09,DBLP:journals/tcs/KantH97,DBLP:conf/soda/Schnyder90}). 
Also, starting from a \LSVR{} of $\langle G_r, G_b \rangle$, we
can compute a \emph{simultaneous RAC embedding} (SRE) of the two graphs with
at most two bends per edge, improving the general upper bound
by Bekos {\em et al.}~\cite{DBLP:conf/walcom/BekosDKW15} for those pairs of graphs
that can be directed and augmented to admit a \LSVR{}. As an application of this result, in Section~\ref{se:wheel+matching} 
we show  an alternative proof of another result by Bekos {\em et al.}
that a wheel graph and a matching admit an SRE with at most two bends for each edge 
of the wheel, and no bends for the matching edges~\cite{DBLP:conf/walcom/BekosDKW15}.

The proof of Theorem~\ref{th:mainth} is based on a characterization described in Section~\ref{se:characterization}, which allows for an
efficient testing algorithm presented in Section~\ref{se:test}. 
Section~\ref{se:preliminaries} contains preliminaries.  In Section~\ref{se:discussion} we conclude with a discussion on possible research directions that arise from our research.

\section{Preliminaries}\label{se:preliminaries}

A graph $G=(V,E)$ is \emph{simple}, if it contains neither loops nor multiple edges. 
We consider simple graphs, if not otherwise specified. 
A \emph{drawing} $\Gamma$ of $G$ maps each vertex of $V$ to a point of
the plane and each edge of $E$ to a Jordan arc between its two
end-points. We only consider \emph{simple drawings}, i.e., 
drawings such that the arcs representing two edges have at most one point in common, 
which is either a common end-vertex or a common interior point where the two arcs properly cross.  
A drawing is \emph{planar} if no two arcs representing two edges cross. 
A planar drawing subdivides the plane into topologically
connected regions, called \emph{faces}. The unbounded region is called
the \emph{outer face}. A \emph{planar embedding} of a graph is an
equivalence class of planar drawings that define the same set of
faces. A graph with a given planar embedding is a \emph{plane}
graph. For a non-planar drawing, we can still derive an embedding
considering that the boundary of a face may consist also of edge
segments between vertices and/or crossing points of edges. 
The unbounded region is still called the outer face.

A graph is \emph{biconnected} if it remains connected after removing
any one vertex. A directed graph (a digraph for short) is biconnected if its underlying undirected graph is biconnected.
The \emph{dual graph} $D$ of a plane graph $G$ is a plane multigraph
whose vertices are the faces of $G$ with an edge between two faces
if and only if they share an edge.
If $G$ is a digraph, $D$ is
also a digraph whose dual edge $e^*$ for a primal edge $e$ is
conventionally directed from the face, $\wleft_{G}(e)$, on the left of
$e$ to the face, $\wright_{G}(e)$, on the right of $e$.
Since we also use the opposite convention, we let
$D^{\shortrightarrow}$ (resp., $D^{\shortleftarrow}$) be the dual whose
edges cross the primal edges from left to right (resp., right to left).


A \emph{topological numbering} of a digraph is an assignment, $X$, of
numbers to its vertices such that $X(u) < X(v)$ for every edge
$(u,v)$.
A graph admits a topological numbering if and only if it is acyclic. 
An acyclic digraph with a single source $s$ and a single sink $t$ is
called an \emph{$st$-graph}.
In such a graph, for every vertex $v$, there exists a directed path
from $s$ to $t$ that contains $v$~\cite{TamassiaTollis86}.
A \emph{plane $st$-graph} is an $st$-graph that is planar and embedded
such that $s$ and $t$ are on the boundary of the outer face.
In any $st$-graph, the presence of the edge $(s,t)$ guarantees that
the graph is biconnected. 
In the following we consider $st$-graphs that contain the edge $(s,t)$, 
as otherwise it can be added without violating planarity.
Let $G$ be a plane $st$-graph, then for each vertex $v$ of $G$ the
incoming edges appear consecutively around $v$, and so do the outgoing
edges. Vertex $s$ only has outgoing edges, while vertex $t$
only has incoming edges.  
This is a particular transversal structure (see
Section~\ref{se:characterization}) known as a \emph{bipolar
  orientation}~\cite{DBLP:journals/dcg/RosenstiehlT86,TamassiaTollis86}.
Each face $f$ of $G$ is bounded by two directed paths with a common origin
and destination, called the \emph{left path} and \emph{right path} of
$f$. For all vertices $v$ and edges $e$ on the left (resp., right) path of
$f$, we let $\wright_{G}(v) = \wright_{G}(e) = f$ (resp., $\wleft_{G}(v) =
\wleft_{G}(e) = f$).

Tamassia and Tollis~\cite{TamassiaTollis86} proved the following lemma.

\begin{lemma}[\cite{TamassiaTollis86}]\label{le:tt}
Let $G$ be a 
plane $st$-graph and let $D^{\shortrightarrow}$ be its dual
graph. Let $u$ and $v$ be two vertices of $G$. Then exactly one of the
following four conditions holds:
\begin{enumerate}
\item $G$ has a path from $u$ to $v$.
\item $G$ has a path from $v$ to $u$.
\item $D^{\shortrightarrow}$ has a path from $\wright_G(u)$ to $\wleft_G(v)$.
\item $D^{\shortrightarrow}$ has a path from $\wright_G(v)$ to $\wright_G(u)$.
\end{enumerate}
\end{lemma}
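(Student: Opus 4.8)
The plan is to read the four conditions as two \emph{vertical} relations (cases~1 and~2), recording primal reachability between $u$ and $v$, and two \emph{horizontal} relations (cases~3 and~4), recording dual reachability between the faces flanking $u$ and $v$. The first thing I would establish is that $D^{\shortrightarrow}$ is itself acyclic. This follows from the classical duality of bipolar orientations: once the edge $(s,t)$ is placed on the outer boundary, it splits the outer face into a left face and a right face, and $D^{\shortrightarrow}$ becomes a plane $st$-graph whose source is the left outer face and whose sink is the right outer face. Consequently both $G$ and $D^{\shortrightarrow}$ admit topological numberings; I would fix one numbering $Y$ of $G$ and one numbering $X$ of $D^{\shortrightarrow}$ as a bookkeeping device throughout.

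Next I would prove that at most one of the four conditions can hold. Conditions~1 and~2 cannot hold together, since a path $u \pathto v$ together with a path $v \pathto u$ would create a directed cycle in the acyclic graph $G$; the same argument applied to $D^{\shortrightarrow}$ rules out conditions~3 and~4 holding simultaneously. The remaining, more delicate, case is that a vertical relation and a horizontal relation cannot coexist. Here the key geometric fact is that a dual edge crosses exactly one primal edge, namely its own; so a directed primal path from $u$ to $v$ and a directed dual path between a face incident to $u$ and a face incident to $v$ would interact through these crossings, and combining them with the local structure around $u$ and $v$ — recall that the incoming and the outgoing edges are each consecutive around every vertex — would close a directed cycle in one of the two acyclic (di)graphs. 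I would make this precise by tracking monotonicity of $X$ and $Y$ along the two paths, so that the claimed dual path forces $\wright_G(u)$ and $\wleft_G(v)$ into an $X$-order incompatible with the primal path.

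The main work is completeness: if neither condition~1 nor~2 holds, so that $u$ and $v$ are incomparable in the primal order, I must produce the dual path required by condition~3 or~4. My approach would be to pick an $s$-$t$ path $P_u$ through $u$ and an $s$-$t$ path $P_v$ through $v$ — such paths exist because every vertex of an $st$-graph lies on one — and to observe that incomparability forces these paths to diverge, placing one of $u,v$ strictly to the left of the other in the embedding. I would then traverse the faces lying in the region between the two paths, using that consecutive such faces share a primal edge and are therefore joined by a dual edge oriented consistently from left to right, to assemble a directed path in $D^{\shortrightarrow}$ from the right face of the left vertex to a flanking face of the right vertex, yielding condition~3 or~4. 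I expect this to be the principal obstacle, since it requires that the left-to-right separation between two incomparable vertices be realized by a \emph{directed} dual path and not merely an undirected one. I would make it rigorous either by induction on the number of faces in the region bounded by $P_u$ and $P_v$, or by appealing to the monotonicity of the dual numbering $X$ across that region, which guarantees the edges can be chained into a single directed path.
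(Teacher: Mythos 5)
You should first note a mismatch of a different kind: the paper does not prove this lemma at all --- it imports it verbatim from Tamassia and Tollis~\cite{TamassiaTollis86} --- so your argument can only be judged on its own merits. Its skeleton is right (the dual $D^{\shortrightarrow}$ is again a plane $st$-graph, hence acyclic; exclusivity inside $\{1,2\}$ and inside $\{3,4\}$; completeness as the hard case), and you correctly single out completeness as the principal obstacle. (Incidentally, condition~4 as printed, ending at $\wright_G(u)$, is a typo for $\wleft_G(u)$; with the printed version, conditions~1 and~4 can hold simultaneously, so the corrected form --- which you implicitly use --- is the one to prove.) However, two of your steps have genuine gaps. The first is the cross-exclusivity step. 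Your concrete plan --- show that a primal path $u \pathto v$ forces $\wright_G(u)$ and $\wleft_G(v)$ into an $X$-order contradicting the dual path --- cannot work, because a primal path forces \emph{no} $X$-order between these faces. Take the ``staircase'' $G$ with vertices $s,u,w,v,t$ and edges $(s,u),(s,w),(s,v),(u,w),(u,t),(w,v),(w,t),(v,t)$, embedded so that the bars of $u,w,v$ step up and to the right (spanning $[0,1]$, $[0.9,2]$, $[1.9,3]$, say). Here $u \pathto v$, yet $\wright_G(u)$ and $\wleft_G(v)$ are \emph{incomparable} in $D^{\shortrightarrow}$, so there are valid topological numberings realizing either order; no monotonicity bookkeeping can extract a contradiction. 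What is true, and what must be proved, is that the dual path cannot exist at all, and the argument is topological, not order-theoretic: extend $u \pathto v$ to an $s$-$t$ path $P$; every face lies wholly left or wholly right of $P$; a dual edge crosses its primal edge only left-to-right; circulating around a vertex of $P$ (crossing its consecutive in-edges or out-edges) also moves only left-to-right; and the path cannot detour through the outer faces, since those are the source and sink of $D^{\shortrightarrow}$. Hence no dual walk starting right of $P$ ever reaches a face left of $P$. Note also that exclusivity of 3 and 4 is not ``the same argument'' as for 1 and 2: conditions 3 and 4 are not inverse paths between one pair of dual vertices, and you need the around-the-vertex dual paths $\wleft_G(w) \pathto \wright_G(w)$ to close a dual cycle.

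The second gap is in completeness itself. The lens between $P_u$ and $P_v$ is the right arena (one should first argue, using incomparability of $u$ and $v$ twice, that $u$ lies strictly inside the left boundary and $v$ strictly inside the right boundary of a \emph{common} lens, else a primal path between them appears). But the step that actually proves the lemma --- chaining faces of the lens into a \emph{directed} dual path starting exactly at $\wright_G(u)$ and ending exactly at $\wleft_G(v)$, including the degenerate case $\wright_G(u)=\wleft_G(v)$ --- is missing, and of your two proposed repairs, the appeal to ``monotonicity of the dual numbering $X$ across that region'' is a non-sequitur: a topological numbering certifies an order once paths exist; it can never conjure a path. (The staircase above is precisely the phenomenon that defeats such reasoning.) The induction you mention can be made to work, but its inductive step is where the lemma lives: for instance, one shows that along the rightmost $a$-to-$b$ path through $u$ inside the lens, the faces on its right-hand side are linked into a directed dual path by the edges crossing the consecutive in-edges at each vertex, and then recurses on the smaller lens between that path and the right boundary containing $v$. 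As written, your proposal asserts the conclusion of this construction rather than supplying it.
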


\begin{wrapfigure}{r}{36mm}
\centering
\includegraphics{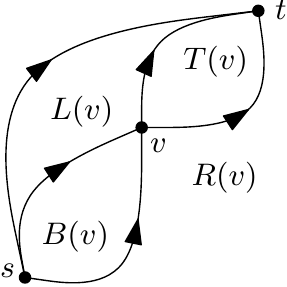}
\caption{Vertex sets $B(v)$, $T(v)$, $L(v)$, and $R(v)$ and their
  corresponding regions of the plane.}\label{fi:st-regions}
\end{wrapfigure}
Let $v$ be a vertex of $G$, then denote by $B(v)$ (resp., $T(v)$)
the set of vertices that can reach (resp., can be reached from)
$v$. Also, denote by $L(v)$ (resp., $R(v)$) the set of vertices
that are to the left (resp., to the right) of every path from
$s$ to $t$ through $v$. By Lemma~\ref{le:tt}, these four sets 
partition the vertices of $G \setminus \{v\}$.
In every planar drawing of $G$, they are
contained in four distinct regions of the plane that share point
$v$. The vertices of $B(v)$ are in the region delimited by the
leftmost and the rightmost paths from $s$ to $v$, while the vertices of $T(v)$ are in the region delimited by the
leftmost and the rightmost paths from $v$ to $t$. Edge $(s,t)$ separates the two
regions containing the vertices of $L(v)$ and $R(v)$, 
as in Fig.~\ref{fi:st-regions}. 
Refer to~\cite{dett-gd-99} for further details.

\newcommand{\reddual}{$D^{\shortrightarrow}_{r}$}
\newcommand{\bluedual}{$D^{\shortleftarrow}_{b}$}
\newcommand{\hreddual}{$\hat D^{\shortrightarrow}_{r}$}
\newcommand{\hbluedual}{$\hat D^{\shortleftarrow}_{b}$}

\section{Characterization}\label{se:characterization}


A \emph{transversal structure} of a plane graph $G$, is a coloring and an orientation of the inner edges (i.e., those edges that do not belong to the outer face) of the graph that obey some local and global conditions. Transversal structures have been widely studied and important applications  have been found. Bipolar orientations (also known as \emph{$st$-orientations}) of plane graphs have been used to compute bar visibility representations~\cite{DBLP:journals/dcg/RosenstiehlT86,TamassiaTollis86}. Further applications can be found in~\cite{f-rsrpg-13,DBLP:journals/dm/Fusy09,DBLP:journals/tcs/KantH97,DBLP:conf/soda/Schnyder90}. See also the survey by Eppstein~\cite{DBLP:conf/cccg/Eppstein10}.

To characterize those pairs of graphs that admit a \LSVR{}, we introduce a new transversal structure for the union of the two graphs (which may be non-planar) and show that it is in bijection with the desired representation. 
In what follows $G_r =(V_r = V \cup \{s_r,t_r\},E_r)$ and $G_b=(V_b = V \cup
\{s_b,t_b\},E_b)$ are two 
plane $st$-graphs with duals \reddual and \bluedual, respectively.

\begin{definition}\label{de:phit}
Given $\Phi: V \rightarrow \mathcal H=\{\SA,\SB,\SC,\SD\}$,
a \emph{(4-polar) \phit{}} is a drawing of a directed (multi)graph on the
vertex set $V \cup \{ s_r, t_r, s_b , t_b \}$ whose edges are
partitioned into red edges, blue edges, and the four special edges
$(s_r,s_b)$, $(s_b,t_r)$, $(t_r,t_b)$, and $(t_b,s_r)$ forming the
outer face, in clockwise order.
In addition, the \phit{} obeys the following conditions:
\begin{itemize}
\item[{\bf c1.}] The red (resp., blue) edges induce an $st$-graph with source $s_r$
(resp., $s_b$) and sink $t_r$ (resp., $t_b$).
\item[{\bf c2.}] For every vertex $u \in  V$, the clockwise order of
  the edges incident to $u$ forms four non-empty blocks of
  monochromatic edges, such that all edges in the same block are
  either all incoming or all outgoing with respect to $u$. The four blocks are
  encountered around $u$ depending on $\Phi(u)$ as in the following
  table.
  
  \medskip

\begin{tabular}{|c|c|c|c|}
\hline
&&&\\[-0.5em]
\includegraphics{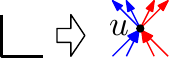} &
\includegraphics{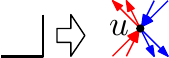} &
\includegraphics{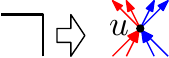} &
\includegraphics{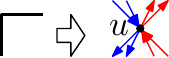} \\[0.5em]
\hline & & & \\[-0.5em]
 $\bface{u}{=}\wright_b(u)$ &
 $\bface{u}{=}\wright_b(u)$ &
 $\bface{u}{=}\wleft_b(u)$ &
 $\bface{u}{=}\wleft_b(u)$\\
 $\rface{u}{=}\wleft_r(u)$ &
 $\rface{u}{=}\wright_r(u)$ &
 $\rface{u}{=}\wright_r(u)$ &
 $\rface{u}{=}\wleft_r(u)$\\
\hline
\end{tabular}

\medskip

\item[{\bf c3.}] Only blue and red edges may cross and only if blue
  crosses red from left to right.
\end{itemize}
\end{definition}

A pair of plane $st$-graphs $\langle G_r, G_b \rangle$ \emph{admits} a
\phit{} if there exists a \phit{} $G_{rb}$ such
that restricting $G_{rb} \setminus \{ s_b, t_b \}$ to the red edges
realizes the planar embedding $G_r$ and
restricting $G_{rb} \setminus \{ s_r, t_r \}$ to the blue edges
realizes the planar embedding $G_b$.

Let $u$ be a vertex of $V$, then the edges of a single color enter and leave $u$ by the same face in the
embedding of the other colored graph. In other words, as condition {\bf c2} indicates, 
$\Phi(u)$ defines the face of $G_b$ (resp., $G_r$), denoted by $\bface{u}$ (resp., $\rface{u}$), 
by which the edges of $G_r$ (resp., $G_b$) incident to $u$ enter and leave $u$, in the \phit{}. 
Also, condition {\bf c3} implies
that edges
$\{(s_r,s_b)$, $(s_b,t_r)$, $(t_r,t_b)$, $(t_b,s_r)\}$
are not crossed, because they are not colored. 

\medskip
In the remainder of this section we will prove the next theorem.

\begin{theorem}\label{th:4polar}
Let $G_r$ and $G_b$ be two 
plane $st$-graphs defined on the same set of vertices $V$
and with distinct sources and sinks.
Let $\Phi:V \rightarrow \mathcal
H=\{\SA,\SB,\SC,\SD\}$. Then $\langle G_r, G_b \rangle$ admits a 
\LSVR{} if and only if it admits a \phit{}.
\end{theorem}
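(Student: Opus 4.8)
The plan is to prove the two directions of the biconditional separately. For the forward direction I would start from a given \LSVR{} $\langle \Gamma_r, \Gamma_b \rangle$ and read off a drawing of the multigraph on $V \cup \{s_r,t_r,s_b,t_b\}$ by contracting each L-shape to its corner point (where its horizontal red bar and vertical blue bar meet), routing every red edge along the vertical visibility segment that realizes it and every blue edge along the corresponding horizontal visibility segment, and closing them off at the corners. I would then check the three defining conditions. Condition \textbf{c1} is immediate, since the vertical visibilities of $\Gamma_r$ realize the plane $st$-graph $G_r$ and the horizontal visibilities of $\Gamma_b$ realize $G_b$. Condition \textbf{c3} follows from the geometry of visibility segments: two red (vertical) segments never cross, two blue (horizontal) segments never cross, and wherever a blue horizontal segment passes over a red vertical segment the crossing is a proper rectilinear crossing; because $G_b$ is directed left-to-right and $G_r$ bottom-to-top, blue always crosses red from left to right.

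The more delicate part of the forward direction is condition \textbf{c2}. Because $G_r$ is drawn bottom-to-top, the red visibilities incident to $u$ attach to its horizontal bar from below (incoming) and from above (outgoing); symmetrically, the blue visibilities attach to its vertical bar from the left (incoming) and from the right (outgoing). The rotation $\Phi(u)$ fixes at which corner the two bars meet, and hence the clockwise order in which the four groups -- incoming red, outgoing red, incoming blue, outgoing blue -- are encountered around the corner. A case analysis over the four rotations in $\mathcal H$ shows that this order is exactly the four-block pattern tabulated in condition \textbf{c2}, and it simultaneously identifies, for each rotation, whether the blue edges of $u$ pass through $\wleft_b(u)$ or $\wright_b(u)$ (defining $\bface{u}$) and whether the red edges pass through $\wleft_r(u)$ or $\wright_r(u)$ (defining $\rface{u}$).

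For the backward direction I would construct a \LSVR{} from a given \phit{} by assigning coordinates. The $y$-coordinate of each corner and the horizontal extent of each red bar come from a bar visibility representation of $G_r$: $y(u)$ is a topological numbering of the red $st$-graph, while the $x$-extent of the red bar of $u$ is governed by a topological numbering of the red dual \reddual{}, so that the vertical visibilities realize exactly the red edges. Symmetrically, the $x$-coordinate of each corner and the vertical extent of each blue bar come from a bar visibility representation of $G_b$, with $x(u)$ a topological numbering of the blue $st$-graph and the $y$-extent of the blue bar governed by the blue dual \bluedual{}. The rotation $\Phi(u)$ then dictates which endpoint of the red bar and which endpoint of the blue bar must coincide at the corner $(x(u),y(u))$.

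The hard part is reconciling these two coordinate systems so that the two bars of each vertex actually share the prescribed corner, the L-shapes are pairwise disjoint, and the crossing pattern agrees with \textbf{c3}. The $x$-axis must simultaneously carry the positions of the blue bars (vertices of $G_b$) and the columns of the red representation (faces of $G_r$), and these must interleave consistently; the symmetric statement holds for the $y$-axis. This is where condition \textbf{c2} does the real work: it pins $\rface{u}$ to be $\wleft_r(u)$ or $\wright_r(u)$, which tells us on which side of $u$'s red bar the blue bar of $u$ sits, and dually for $\bface{u}$. Combining this with Lemma~\ref{le:tt} applied to $G_r$ and to $G_b$ -- which guarantees that any two vertices are comparable either along the primal or along the dual -- I expect to obtain a single linear order for the $x$-coordinates (and one for the $y$-coordinates) that makes every corner land correctly. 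Condition \textbf{c3} then ensures that the only segment intersections are blue-over-red rectilinear crossings, so no two L-shapes overlap and the visibilities realize exactly $G_r$ and $G_b$. Establishing this joint consistency, rather than verifying any single condition in isolation, is the main obstacle.
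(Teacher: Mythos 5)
Your forward direction is essentially the paper's necessity argument: collapse each L-shape to its corner, reroute each edge inside its own visibility segment so that it terminates at the two corners, and verify \textbf{c1}--\textbf{c3}; that part is fine. The genuine gap is in the backward direction, precisely at the point you yourself call the main obstacle---and you leave it unresolved. As you set things up, $y(u)$ is a topological numbering of $G_r$ and $x(u)$ one of $G_b$, chosen independently of the dual numberings that fix the bar extents. Then nothing forces the corner $(x(u),y(u))$ to be an endpoint of either bar: for that, $x(u)$ would have to equal the red-dual number of $\wleft_r(u)$ or $\wright_r(u)$ (the ends of $u$'s horizontal bar), and $y(u)$ the blue-dual number of $\wleft_b(u)$ or $\wright_b(u)$ (the ends of $u$'s vertical bar). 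Your hope that Lemma~\ref{le:tt} yields ``a single linear order'' interleaving vertices of $G_b$ with faces of $G_r$ is not a proof: that lemma compares two vertices of one plane $st$-graph via its own dual, and says nothing about how vertices of one graph sit relative to faces of the other. That relation is exactly what the transversal conditions must supply, and your proposal never extracts it.

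The paper's construction makes the corners coincide by definition rather than by reconciliation: the $y$-coordinate of $u$'s horizontal bar is set to $n_b^*(\bface{u}) + n_r(u)\delta$ and the $x$-coordinate of $u$'s vertical bar to $n_r^*(\rface{u}) + n_b(u)\delta$, where $\rface{u}$ and $\bface{u}$ are the faces prescribed by \textbf{c2}, $n_r^*$ and $n_b^*$ are dual numberings, and the primal numberings enter only as $\delta$-small tiebreakers. The real work is then Lemma~\ref{le:validstxy}: these mixed assignments are still valid topological numberings, because by \textbf{c2} and \textbf{c3} the blue edges crossed by a red edge $(u,v)$ trace a directed path in \bluedual{} from $\bface{u}$ to $\bface{v}$, so $n_b^*(\bface{u}) \le n_b^*(\bface{v})$, with the tiebreaker making the inequality strict (and symmetrically for blue edges and the red dual). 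This is the idea your proposal is missing. In addition, you assert rather than prove that no two L-shapes intersect; in the paper this is the second half of Lemma~\ref{le:lshapes} and needs a genuine argument: a proper intersection of $u$'s vertical bar with $v$'s horizontal bar would force, via the dual numberings and Lemma~\ref{le:tt}, vertex $u$ into $B_r(v)$ or $T_r(v)$ and vertex $v$ into $B_b(u)$ or $T_b(u)$, and then the boundaries of these two regions would have to cross right-to-left or share a vertex, contradicting \textbf{c3} or \textbf{c2}. So two essential pieces are missing: the concrete coordinate assignment that makes the corners meet, and the non-intersection proof.
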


\begin{figure}[t]
\centering
\subfigure[]{\includegraphics{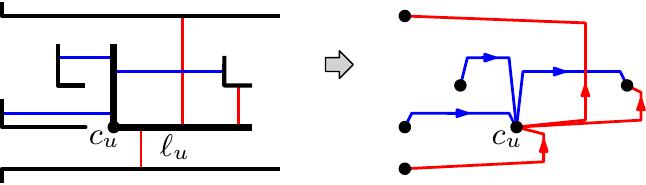}\label{fi:necessity}}\hfil
\subfigure[]{\includegraphics{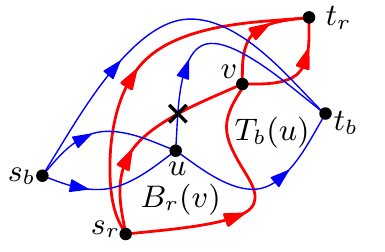}\label{fi:intersectingLshapes}}
\caption{(a) The replacement of the L-shape, $\ell_u$, for vertex $u$
  with its corner point $c_u$ and the drawing of $u$'s adjacent edges
  with $2$ bends per edge when
  constructing a \phit{} from a \LSVR{}.
 Only $\ell_u$'s visibilities are shown.
(b) Illustration for the proof of Lemma~\ref{le:lshapes}: the case when $u$ is in $B(v)$ and $v$ is in $T(u)$. }
\end{figure}
The necessity of the \phit{} is easily shown. Let
$\langle\Gamma_r,\Gamma_b\rangle$ be a \LSVR{} of $\langle G_r, G_b
\rangle$ with two additional  horizontal bars at the bottommost and
topmost sides of the drawing that represent $s_r$ and $t_r$, and two
additional vertical bars at the leftmost and rightmost sides of the
drawing that represent $s_b$ and $t_b$. From such a drawing we can
extract a \phit{} $G_{rb}$ as follows. Since the four vertices
$s_r$, $t_r$, $s_b$, and $t_b$ are represented by the extreme bars in
the drawing, these four vertices belong to the outer face, 
and the four edges on the outer face can be added without crossings. 
Also, we color red all inner edges represented by vertical visibilities
(directed from bottom to top), and blue all inner edges represented by
horizontal visibilities (directed from left to right).
To see that conditions {\bf c1}, {\bf c2} and {\bf c3} are satisfied, 
let $G_{rb}$ be a polyline drawing of computed as follows. 
Let $c_u$ be the corner of the L-shape, $\ell_u$, representing vertex
$u$.
For every edge $(u,v)$, replace its visibility segment by a polyline
from $c_u$ to $c_v$ that has two bends, both contained in the
visibility segment and each at distance $\delta$ from a different one of its
endpoints, for an arbitrarily small, fixed $\delta > 0$.
See Fig.~\ref{fi:necessity}.
Finally, replace every L-shape $\ell_u$ with its
corner $c_u$. Since each bar visibility representation preserves the embedding of the input graph, {\bf c1} is respected.
Also, {\bf c2} and {\bf c3} are clearly satisfied by the embedding derived from $G_{rb}$. 
We remark that, by construction, each edge is represented by a polyline with two bends
and two edges cross only at right angles; this observation will be used in Section~\ref{se:discussion}.

\medskip

To prove sufficiency, assume $\langle G_r, G_b \rangle$ admits a \phit{} $G_{rb}$. We present an
algorithm, \algo, that takes as input $G_{rb}$ and returns a \LSVR{} $\langle\Gamma_r,\Gamma_b\rangle$ of $\langle G_r, G_b \rangle$. We first recall the algorithm by Tamassia and Tollis (TT in the following) to compute an embedding preserving bar visibility representation of a 
plane $st$-graph $G$, see~\cite{dett-gd-99,TamassiaTollis86}: 
\begin{enumerate}
\item Compute the dual $D^{\shortrightarrow}$ of $G$.
\item Compute a pair of topological numberings $Y$ of $G$ and $X$ of $D^{\shortrightarrow}$. 
\item Draw each vertex $v$ as a horizontal bar with
$y$-coordinate $Y(v)$ and between $x$-coordinates $X(\wleft_{G}(v))$
and $X(\wright_{G}(v))-\epsilon$. 
\item Draw each edge $e=(u,v)$ as a vertical segment at $x$-coordinate $X(\wleft_{G}(e))$, between $y$-coordinates $Y(u)$ and $Y(v)$, and with thickness $\epsilon$.
\end{enumerate}

We are now ready to describe algorithm \algo.

\begin{description}
\item[Step 1:] Compute the dual graphs \reddual~of $G_r$ and \bluedual~of $G_b$. 


\item[Step 2:] Compute a pair of topological numberings $n_r$ of $G_r$ and $n_b$ of $G_b$.

\item[Step 3:] Compute a pair of topological numberings $n_r^*$ of \reddual~and $n_b^*$ of \bluedual.

\item[Step 4:] Compute a bar visibility representation $\Gamma_r$ of
  $G_r$ by using the TT algorithm with 
  $X(u)$=$X_r(u)$ = $n_r^*(u)$ and
  $Y(u)$=$Y_r(u)$=$n^*_b(\bface{u}) + n_r(u)\delta$, for each vertex $u$. 
  Also, shift the horizontal segment for each vertex $u$ to
  the left by $n_b(u)\delta$.

\item[Step 5:] Compute a bar visibility representation $\Gamma'_b$ of $G_b$ by 
  using the TT algorithm with 
  $X(u)$=$X_b(u)$=$n_b^*(u)$ and  $Y(u)$=$Y_b(u)$=$n^*_r(\rface{u}) + n_b(u)\delta$, for each vertex $u$. 
  Then turn $\Gamma'_b$ into a vertical bar visibility
  representation, $\Gamma_b$, by drawing every horizontal segment
  $((x_0,y),(x_1,y))$ in $\Gamma'_b$ as the vertical segment
  $((y,x_0),(y,x_1))$ in $\Gamma_b$. 
  Finally, shift the vertical segment for each vertex $u$ up by
  $n_r(u)\delta$.


\end{description}

Lemma~\ref{le:validstxy} guarantees that $Y_r$ and $Y_b$ are valid topological numberings, and thus, that $\Gamma_r$ and $\Gamma_b$ are two bar visibility representations. 
Also, Lemma~\ref{le:lshapes} ensures the union of $\Gamma_r$ and $\Gamma_b$ is a \LSVR{}. 
The shifts performed at the end of {\bf Steps 4-5} are to prevent the bars of two L-shapes from coinciding.  
The value $\delta > 0$ is chosen to be less than $\epsilon$ and less than the smallest difference between distinct numbers divided by the largest number from any topological numbering $n_r$, $n_b$, $n^*_r$, or $n^*_b$.
This choice of $\delta$ guarantees that all visibilities are preserved after the shift, 
and that no new visibilities are introduced.

\begin{lemma}\label{le:validstxy}
$Y_r$ is a valid topological numbering of $G_r$ and
$Y_b$ is a valid topological numbering of $G_b$.
\end{lemma}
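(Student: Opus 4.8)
The plan is to verify the defining monotonicity of a topological numbering directly. For a red edge $(u,v)\in E_r$,
\[
Y_r(v)-Y_r(u)=\bigl(n^*_b(\bface{v})-n^*_b(\bface{u})\bigr)+\bigl(n_r(v)-n_r(u)\bigr)\delta ,
\]
and the second summand is strictly positive since $n_r$ is a topological numbering of $G_r$ and $\delta>0$. Hence it suffices to establish the structural fact $n^*_b(\bface{u})\le n^*_b(\bface{v})$, with strict inequality whenever $\bface{u}\ne\bface{v}$: when the faces differ the first summand alone is positive, and when they coincide the positive $\delta$-term settles the inequality. The whole lemma thus reduces to this fact for red edges, and, symmetrically, to $n^*_r(\rface{u})\le n^*_r(\rface{v})$ for blue edges.

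To prove the red-edge fact I would exhibit a directed path in \bluedual\ from $\bface{u}$ to $\bface{v}$ (a trivial path if the two faces coincide); since $n^*_b$ is a topological numbering of \bluedual, such a path immediately yields $n^*_b(\bface{u})\le n^*_b(\bface{v})$, strict when the faces differ. I view the red edge $(u,v)$ as a curve $\gamma$ drawn inside the planar embedding of $G_b$. By condition {\bf c2} the red edges at a vertex share a single incident blue face, so $\gamma$ leaves $u$ into $\bface{u}$ and reaches $v$ from within $\bface{v}$; by condition {\bf c3}, $\gamma$ meets no red edge and crosses blue edges only, each with blue passing over red from left to right. Consequently $\gamma$ visits a sequence of blue faces $\bface{u}=f_0,f_1,\dots,f_k=\bface{v}$ in which consecutive faces $f_{i-1},f_i$ are separated by the crossed blue edge $e_i$.

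The crux is to orient each step correctly. Orienting $\gamma$ locally along its own direction at the crossing with $e_i$, condition {\bf c3} makes $e_i$ pass from the left of $\gamma$ to its right; reading off the two incident faces, $\gamma$ leaves $\wright_b(e_i)$ and enters $\wleft_b(e_i)$. Since \bluedual\ orients the dual of $e_i$ from $\wright_b(e_i)$ to $\wleft_b(e_i)$, every step $f_{i-1}\to f_i$ is a forward edge of \bluedual, so the sequence is a directed path from $\bface{u}$ to $\bface{v}$, as needed; this gives $Y_r(u)<Y_r(v)$. The blue-edge case is entirely symmetric: a blue edge meets only red edges, {\bf c3} now reads as red crossing blue from right to left, and each step becomes a forward edge of \reddual, producing a directed path from $\rface{u}$ to $\rface{v}$ and hence $Y_b(u)<Y_b(v)$.

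I expect the orientation bookkeeping of the previous paragraph to be the main obstacle: one must align the ``left-to-right'' crossing rule of {\bf c3} with the \emph{right}-to-left convention defining \bluedual\ (and the left-to-right convention defining \reddual), and confirm through {\bf c2} that the endpoints of the dual path are exactly the prescribed faces $\bface{u}$ and $\bface{v}$, not some neighbouring faces. Once these sign conventions are pinned down, the inequalities follow and the role of $\delta$ is merely to break ties between vertices sharing a common incident face.
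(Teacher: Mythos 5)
Your proof is correct and takes essentially the same route as the paper's: both reduce the claim to exhibiting, for each red edge $(u,v)$, a directed path in \bluedual{} from $\bface{u}$ to $\bface{v}$ through the duals of the crossed blue edges (using {\bf c2} for the endpoints and {\bf c3} for the orientation of each step), and then use the $n_r(u)\delta$ term to break ties when $\bface{u}=\bface{v}$. Your write-up is simply more explicit about the left/right orientation bookkeeping that the paper compresses into ``due to conditions {\bf c2} and {\bf c3}, there exists a path.''
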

\begin{proof}
Let $(u,v)$ be a red edge from $u$ to $v$.
We know that $n_r(u) < n_r(v)$.
Let $e_0$, $e_1$, $\dots$, $e_k$ be the blue edges crossed by $(u,v)$ in $G_{rb}$.
Due to conditions {\bf c2} and {\bf c3}, there exists a path
$\{
\bface{u} = \wright_b(e_0)$, 
$\wleft_b(e_0) = \wright_b(e_1)$, 
$\dots$, 
$\wleft_b(e_{k-1}) = \wright_b(e_k)$, 
$\wleft_b(e_k) = \bface{v} \}$ in \bluedual.
Thus, we also know that $n_b^*(\bface{u}) \leq n_b^*(\bface{v})$.
Since $Y_r(u) = n^*_b(\bface{u}) + n_r(u)\delta$ and $\delta >0$,
it follows that $Y_r(u) < Y_r(v)$.
A symmetric argument shows $Y_b(u) < Y_b(v)$ if $(u,v)$ is a blue edge.
\end{proof}

\begin{lemma}\label{le:lshapes}
Each vertex $u$ of $V$ is represented by an L-shape $\ell_u$ in $\langle \Gamma_r,\Gamma_b\rangle$ as defined by the function $\Phi$. Also no two L-shapes intersect each other.
\end{lemma}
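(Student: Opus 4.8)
The plan is to prove the lemma in two parts, mirroring its two assertions. First I would show that each vertex $u$ is drawn as an L-shape with the rotation prescribed by $\Phi(u)$, and then that no two L-shapes intersect.

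\textbf{Part 1: each vertex is an L-shape of the correct rotation.}
By construction in \textbf{Step 4}, the horizontal segment $\Gamma_r(u)$ has $y$-coordinate $Y_r(u)=n^*_b(\bface{u})+n_r(u)\delta$ and spans $x$-coordinates from $X_r(\wleft_r(u))$ to $X_r(\wright_r(u))-\epsilon$, then is shifted left by $n_b(u)\delta$. By \textbf{Step 5}, the vertical segment $\Gamma_b(u)$ has $x$-coordinate $X_b(u)=n^*_b(u)$ (which, after the coordinate swap turning $\Gamma'_b$ into $\Gamma_b$, becomes the $x$-coordinate $Y_b(u)=n^*_r(\rface{u})+n_b(u)\delta$) and spans the $y$-range determined by $\wleft_b(u),\wright_b(u)$, then is shifted up by $n_r(u)\delta$. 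The key observation is that condition \textbf{c2} pins down, for each of the four rotations in $\mathcal H$, which dual faces $\bface{u},\rface{u}$ equal $\wleft/\wright$ of $u$ in the two colored graphs. I would verify, for one representative rotation (say \SA) and then argue the other three are symmetric, that the $x$-coordinate of the vertical bar $\Gamma_b(u)$ coincides with the appropriate endpoint ($\wleft_r(u)$ or $\wright_r(u)$ side) of the horizontal bar $\Gamma_r(u)$, and dually that the $y$-coordinate of $\Gamma_r(u)$ coincides with the appropriate endpoint of $\Gamma_b(u)$; this forces the two bars to share exactly one endpoint, namely the corner $c_u$, and produces precisely the L-orientation dictated by $\Phi(u)$. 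The small $\delta$-shifts only perturb coordinates within the gap guaranteed by the choice of $\delta<\epsilon$, so they do not destroy the incidence at the corner.

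\textbf{Part 2: no two L-shapes intersect.}
Here I would use Lemma~\ref{le:tt} to reduce to four cases according to the relative position of two distinct vertices $u,v\in V$: either $u\in B(v)$ (or symmetrically $v\in B(u)$), i.e.\ there is a directed monochromatic path between them in one graph, or $u\in L(v)$ / $u\in R(v)$, i.e.\ they are separated by a path in the dual. The idea is that in each case the numberings $n_r,n_b,n^*_r,n^*_b$ and the TT construction already separate the corresponding bars in one coordinate, and I must check that no accidental overlap of the perpendicular bars is created. The illustrative case flagged in Fig.~\ref{fi:intersectingLshapes} is $u\in B(v)$ and $v\in T(u)$: then a red path from $u$ to $v$ gives $Y_r(u)<Y_r(v)$ by Lemma~\ref{le:validstxy}, so the horizontal bars lie at different heights and the vertical bars, which attach at those heights, cannot cross; I would argue that the horizontal extent of the bars plus the dual path relating their faces keeps the $x$-ranges compatible. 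The remaining cases ($L/R$ and the blue-symmetric versions) follow by the same mechanism with the roles of $G_r,G_b$ and of the two coordinates exchanged, invoking the dual-path conclusions of Lemma~\ref{le:tt}.

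\textbf{Main obstacle.}
The hard part will be the non-intersection argument, specifically ruling out the subtle crossings where one vertex's horizontal bar and another's vertical bar are at nearly equal coordinates. The TT algorithm guarantees separation for same-color visibilities, but an L-shape mixes a horizontal (red-frame) bar with a vertical (blue-frame) bar, so I must show that the \emph{combined} coordinate system, in which $Y_r$ uses $n^*_b(\bface{\cdot})$ and $Y_b$ uses $n^*_r(\rface{\cdot})$, never places the corner or body of one L inside the other. This is exactly where the precise value of $\delta$ and the case analysis from Lemma~\ref{le:tt} must be combined carefully; I expect the tie-breaking $\delta$-shifts and the four-region partition $B,T,L,R$ of $V\setminus\{v\}$ to be the crux that prevents coincidences, and I would devote most of the rigorous work to the $B/T$ case depicted in Fig.~\ref{fi:intersectingLshapes} and then transfer it to the others by the color/coordinate symmetry.
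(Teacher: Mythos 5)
Your Part~1 is essentially the paper's argument: using condition {\bf c2} (which for rotation \SA identifies $\bface{u}$ with $\wright_b(u)$ and $\rface{u}$ with $\wleft_r(u)$), one checks that the bottom end of the vertical bar sits at the $y$-coordinate $Y_r(u)$ of the horizontal bar and the left end of the horizontal bar sits at the $x$-coordinate $Y_b(u)$ of the vertical bar, so the two bars share the corner, and the other rotations are symmetric. That part is fine.

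Part~2 contains a genuine gap. First, the case analysis is set up in the wrong place: the position of $u$ relative to $v$ must be tracked in \emph{both} graphs simultaneously (its class among $B_r(v)$, $T_r(v)$, $L_r(v)$, $R_r(v)$ \emph{and} its class with respect to $G_b$), so you do not get four cases but a product of cases, and the dangerous configuration is a mixed one. Second, and more seriously, your treatment of the representative case fails: if $u\in B_r(v)$ then indeed $Y_r(u)<Y_r(v)$, so the two \emph{horizontal} bars lie at different heights --- but that was never the threat, since parallel bars produced by the TT algorithm for the same graph cannot cross. The threat is that the \emph{vertical} bar of $u$ (which for rotation \SA extends upward from its horizontal bar) stabs the \emph{horizontal} bar of $v$ lying above it; different heights of the horizontal bars do nothing to prevent this, and neither do the $\delta$-shifts (which only break ties) nor Lemma~\ref{le:tt} by itself. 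In fact, no argument using only the numberings and the TT construction can possibly work: Theorem~\ref{th:negative-fixedemb} exhibits pairs admitting no \LSVR{} for any $\Phi$, so non-intersection cannot be a property of the drawing procedure alone --- the proof must use the hypothesis that the input is a \phit{}, i.e.\ conditions {\bf c2} and {\bf c3}, which your plan never invokes. The paper's proof does exactly this: assuming $u$'s vertical bar properly crosses $v$'s horizontal bar, the $x$-coordinate inequalities force a path in the red dual $D^{\shortrightarrow}_r$ from $\wleft_r(v)$ through $\rface{u}$ to $\wright_r(v)$, hence $u\in B_r(v)\cup T_r(v)$ with $\rface{u}$ (where $u$'s blue edges attach) inside the corresponding region; the $y$-coordinate inequalities symmetrically force $v\in B_b(u)\cup T_b(u)$ with $\bface{v}$ inside that region; and in every combination the directed boundary of the blue region containing $v$ must cross the directed boundary of the red region containing $u$ from right to left, violating {\bf c3} (or {\bf c2} if the boundaries share a vertex). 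That reduction from a geometric crossing in the drawing to a forbidden crossing in the \phit{} is the missing idea in your proposal.
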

\begin{proof}
Suppose $\Phi(u){=}~\SA$, as the other cases are similar.
Then, $\bface{u}{=}\wright_b(u)$ and $\rface{u}{=}\wleft_r(u)$.
The horizontal bar representing $u$ in $\Gamma_r$ is the segment 
$[p_0(u),p_1(u)]$, where the two points $p_0(u)$ and $p_1(u)$ are   
$p_0(u)= (n_r^*(\wleft_r(u)) + n_b(u)\delta$, $Y_r(u))$, and 
$p_1(u) = (n_r^*(\wright_r(u)) + n_b(u)\delta$, $Y_r(u))$. 
Note that, $n_r^*(\wleft_r(u)) < n_r^*(\wright_r(u))$.
The vertical bar representing $u$ in $\Gamma_b$ is the segment 
$[q_0(u),q_1(u)]$, where the two points $q_0(u)$ and $q_1(u)$ are  
$q_0(u)= (Y_b(u)$, $n_b^*(\wright_b(u)) + n_r(u)\delta)$, and
$q_1(u) = (Y_b(u)$, $n_b^*(\wleft_b(u)) + n_r(u)\delta)$. 
Note that, $n_b^*(\wright_b(u)) < n_b^*(\wleft_b(u))$.
Since $Y_r(u) = n^*_b(\bface{u}) + n_r(u)\delta = n^*_b(\wright_b(u))
+ n_r(u)\delta$, the bottom coordinate of the vertical bar
representing $u$ matches the $y$-coordinate of the horizontal bar
representing $u$.
Since $Y_b(u) = n^*_r(\rface{u}) + n_b(u)\delta = n^*_r(\wleft_r(u))
+ n_b(u)\delta$, the left coordinate of the horizontal bar
representing $u$ matches the $x$-coordinate of the vertical bar
representing $u$.
Thus the two bars form the L-shape \SA.

We now show that no two L-shapes properly intersect each other. 
Suppose by contradiction that the vertical bar of a vertex $u$, 
properly intersects the horizontal bar of a vertex $v$. 
Based on $\Phi$, the vertical bar of $u$ involved in the intersection is either a left
vertical bar or a right vertical bar, and it is drawn at
$x$-coordinate $n^*_r(\wleft_r(u)) + n_b(u)\delta$ or
$n^*_r(\wright_r(u)) + n_b(u)\delta$,
respectively.
Suppose it is a left vertical bar, as the other case is symmetric. 
Since $u$'s vertical bar properly intersects $v$'s horizontal bar,
we know by construction that
$
n^*_r(\wleft_r(v)) + n_b(v)\delta
<
n^*_r(\wleft_r(u)) + n_b(u)\delta
<
n^*_r(\wright_r(v)) + n_b(v)\delta
$.
Proper intersection implies that these inequalities are strict,
that there is a path in the red dual \reddual~from $\wleft_r(v)$ to
$\wleft_r(u)$ to $\wright_r(v)$, and that the three faces are
distinct. This implies that $u$ belongs either to $B_r(v)$ or to
$T_r(v)$, and it lies in the corresponding regions of the plane, with
$\rface{u}$ (and hence the start/end of curves representing blue edges
incident to $u$) inside the region.
Similarly, by considering the blue dual \bluedual,   
$n^*_b(\wright_b(u)) + n_r(u)\delta
<
n^*_b(\bface{v}) + n_r(v)\delta
<
n^*_b(\wleft_b(u)) + n_r(u)\delta
$, we know that $v$ belongs either to $B_b(u)$, or to
$T_b(u)$, and it lies in the corresponding regions of the plane, with
$\bface{v}$ (and hence the start/end of curves representing red edges
incident to $v$) inside the region.
No matter which region, $B_r(v)$ or $T_r(v)$, vertex $u$ lies in, or
which region, $B_b(u)$ or $T_b(u)$, vertex $v$ lies in, the directed boundary
of the blue region ($B_b(u)$ or $T_b(u)$) containing $v$
crosses the directed boundary of the red region ($B_r(v)$ or $T_r(v)$)
containing $u$ from right to left.
This either violates condition {\bf c3} (if edges of the boundaries cross) or it
violates condition {\bf c2} (if the boundaries share a vertex). See Fig.~\ref{fi:intersectingLshapes} for an illustration.
\end{proof}

\begin{theorem}\label{th:algorithm}
Let $G_r$ and $G_b$ be two 
plane $st$-graphs defined on the same set of $n$ vertices $V$
and with distinct sources and sinks.
Let $\Phi:V \rightarrow \mathcal H=\{\SA,\SB,\SC,\SD\}$. If $\langle
G_r, G_b \rangle$ admits a \phit{}, then algorithm \algo~computes a 
\LSVR{} of $\langle G_r, G_b \rangle$ in $O(n)$ time. 
\end{theorem}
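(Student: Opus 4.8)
The plan is to prove the two assertions of the statement—that \algo{} produces a correct \LSVR{} and that it runs in $O(n)$ time—separately, relying on the two lemmas just established.

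For correctness, I would first certify that \algo{} outputs two genuine bar visibility representations. By construction in Step 3, $X_r = n_r^*$ and $X_b = n_b^*$ are topological numberings of the duals \reddual{} and \bluedual{}, and by Lemma~\ref{le:validstxy} the vertex numberings $Y_r$ and $Y_b$ are valid topological numberings of $G_r$ and $G_b$. Since the TT algorithm turns any pair consisting of a topological numbering of a plane $st$-graph and one of its dual into a valid bar visibility representation, Steps 4 and 5 produce valid representations $\Gamma_r$ of $G_r$ and, after the coordinate swap, $\Gamma_b$ of $G_b$, in which every red edge is a vertical visibility segment and every blue edge a horizontal one. I would then argue that the two perturbations—shifting $u$'s horizontal bar left by $n_b(u)\delta$ in Step 4 and $u$'s vertical bar up by $n_r(u)\delta$ in Step 5—move each bar by strictly less than the minimum gap between distinct coordinate values, which is precisely what the choice of $\delta$ enforces; hence no visibility is destroyed and none is created. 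Finally, I would invoke Lemma~\ref{le:lshapes}: for every $u$ the horizontal bar in $\Gamma_r$ and the vertical bar in $\Gamma_b$ share an endpoint and realize the L-shape prescribed by $\Phi(u)$, and no two L-shapes intersect. Assembling these facts shows that $\langle \Gamma_r,\Gamma_b\rangle$ is a \LSVR{} of $\langle G_r,G_b\rangle$.

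For the running time, I would use planarity to cap every quantity at $O(n)$. Because $G_r$ and $G_b$ each have $O(n)$ vertices, Euler's formula bounds their edges and faces by $O(n)$, so \reddual{} and \bluedual{} also have $O(n)$ vertices and $O(n)$ edges. Then Step 1 builds the two duals in $O(n)$ time; Steps 2 and 3 each compute a topological numbering of a DAG of size $O(n)$, in $O(n)$ time; and Steps 4 and 5 run the TT algorithm in $O(n)$ time, with the additive $\delta$-terms, the left/up shifts, and the coordinate swap of Step 5 each evaluated in $O(1)$ per vertex and thus $O(n)$ overall. A value of $\delta$ satisfying the stated requirement is obtained by a single $O(n)$ scan of the four numberings. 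Since \algo{} performs a constant number of such steps, its total running time is $O(n)$.

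Almost all of the conceptual content already resides in Lemmas~\ref{le:validstxy} and~\ref{le:lshapes}, so I expect the theorem's proof to be a matter of careful bookkeeping rather than new ideas. The one point demanding attention is verifying that the two $\delta$-shifts in Steps 4 and 5 preserve every existing visibility and introduce no spurious one; this reduces to comparing $\delta$ with the smallest separation between distinct coordinate values, exactly the argument sketched just before Lemma~\ref{le:validstxy}. Once that is settled, both the correctness and the linear-time bound follow by routine assembly.
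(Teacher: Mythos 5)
Your proposal is correct and follows essentially the same route as the paper: correctness is obtained by combining Lemma~\ref{le:validstxy} (validity of the numberings $Y_r$, $Y_b$, hence of the TT-based bar visibility representations) with Lemma~\ref{le:lshapes} (L-shapes as prescribed by $\Phi$ and no intersections), and the $O(n)$ bound follows from the linear-time construction of duals, topological numberings, and the TT algorithm. The paper states this more tersely, citing the literature for the running time, while you additionally spell out the role of the $\delta$-shifts and the Euler-formula size bounds, but no new idea is involved.
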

\begin{proof}
Lemmas~\ref{le:validstxy} and~\ref{le:lshapes} imply that
\algo~computes a \LSVR{} of $\langle G_r, G_b \rangle$. 
Computing the dual graphs and the four topological numberings ({\bf Steps 1-3}), as well as computing the two bar visibility representations and 
shifting each segment ({\bf Steps 4-5}), can be done in $O(n)$ time, as shown in~\cite{dett-gd-99,TamassiaTollis86}.
\end{proof}

\section{Testing Algorithm}\label{se:test}

In this section, we first show that there exists a pair of plane  $st$-graphs $\langle G_r,G_b \rangle$ 
that does not admit an \LSVR~for any possible function $\Phi$. 
This emphasizes the need for an efficient testing algorithm. 
Then, we show how to test whether two plane $st$-graphs with
the same set of  vertices admit a \LSVR{} for a given function $\phi$. 

\begin{theorem}\label{th:negative-fixedemb}
There exists a pair of 
plane  $st$-graphs $\langle G_r,G_b \rangle$ that does not admit a \LSVR~for any possible function $\Phi$.
\end{theorem}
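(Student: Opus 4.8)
The plan is to reduce to the combinatorial setting via Theorem~\ref{th:4polar}: since $\langle G_r,G_b\rangle$ admits a \LSVR{} for a given $\Phi$ if and only if it admits a \phit{}, it suffices to construct a pair $\langle G_r,G_b\rangle$ that admits no \phit{} for \emph{any} choice of $\Phi$. The observation that makes an ``all $\Phi$'' statement tractable is that the four vertex sets $B_r(v),T_r(v),L_r(v),R_r(v)$ of $G_r$ (and their blue analogues in $G_b$) depend only on the embeddings and orientations of the two graphs, not on $\Phi$. Hence I would look for a structural obstruction that is insensitive to the rotation choices, so that only finitely many local configurations at a few critical vertices remain to be checked.

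First I would exhibit an explicit pair $\langle G_r,G_b\rangle$ (presented in a figure) built around a vertex $w$ whose neighbourhoods in the two colours are deliberately misaligned. Concretely, I would place structure so that $w$ is forced to be comparable to certain vertices in $G_r$ and to others in $G_b$ in such a way that the red faces $\wleft_r(w),\wright_r(w)$ and the blue faces $\wleft_b(w),\wright_b(w)$ around $w$ cannot simultaneously accommodate the incident edges of the other colour. Since condition {\bf c2} forces all red edges at $w$ to enter and leave through the single face $\bface{w}\in\{\wleft_b(w),\wright_b(w)\}$ and all blue edges at $w$ through the single face $\rface{w}\in\{\wleft_r(w),\wright_r(w)\}$, the rotation $\Phi(w)$ amounts to a choice of one of these four blue-face/red-face pairs; the design goal is that each of the four pairs routes some incident edge into a region of the opposite colour from which it can only escape by crossing an edge of that colour in the forbidden right-to-left direction.

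For the contradiction I would reuse the crossing-direction argument from the proof of Lemma~\ref{le:lshapes} almost verbatim. Fixing any hypothetical \phit{}, I identify two vertices $u,v$ whose relative position is pinned by the graph structure: $u\in B_r(v)\cup T_r(v)$ (so $u$ lies above or below $v$ in $G_r$) while $v\in B_b(u)\cup T_b(u)$ (so $v$ lies above or below $u$ in $G_b$), with the faces $\rface{u}$ and $\bface{v}$ forced toward each other by the chosen rotations. Then, exactly as in Lemma~\ref{le:lshapes}, the directed boundary of the blue region containing $v$ must cross the directed boundary of the red region containing $u$ from right to left, violating {\bf c3} if the boundaries cross and {\bf c2} if they share a vertex. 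By Lemma~\ref{le:tt} this case distinction is exhaustive, so every candidate \phit{} is ruled out.

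The hard part is making the argument genuinely independent of $\Phi$, i.e.\ closing all $4^{|V|}$ rotation assignments at once rather than one at a time. I would handle this by (i) choosing the example to be symmetric under the natural $90^\circ$ symmetry that permutes $\{\SA,\SB,\SC,\SD\}$ together with the four regions $B,T,L,R$, so that the four rotations of each critical vertex collapse to a single representative case, and (ii) using a pigeonhole count over the four rotation types when more than one free vertex is involved, forcing two critical vertices into the same type and hence into a comparable-in-both-colours configuration with conflicting faces. The main obstacle, and the step I expect to require the most care, is verifying that this forcing is robust: that for the constructed example no rotation of the critical vertices can place $\rface{\cdot}$ and $\bface{\cdot}$ on the ``good'' side simultaneously, so that the crossing-direction contradiction is genuinely unavoidable.
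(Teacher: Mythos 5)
Your proposal correctly reduces the problem via Theorem~\ref{th:4polar} and identifies the right \emph{kind} of obstruction---a configuration in which every rotation choice forces a blue edge to cross red edges in the forbidden direction, violating {\bf c3}---but it never actually exhibits the witness pair. For an existence theorem, the construction \emph{is} the proof; the reduction and the case analysis over rotations are routine once the example is in hand. You explicitly defer exactly this step (``verifying that this forcing is robust \dots is the step I expect to require the most care''), and the auxiliary machinery you propose to close it (a $90^\circ$ symmetry collapsing the four rotations, a pigeonhole argument over rotation types of several critical vertices) is speculative: whether any example supports such a symmetry-plus-pigeonhole argument is precisely what is in question. As written, the proposal amounts to ``there should exist a pair with property X, and if it exists the theorem follows,'' which restates the theorem rather than proving it.

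The gap also hides a simplification that makes the theorem much easier than your plan suggests, and which is how the paper proceeds. The paper fixes one concrete series-parallel plane $st$-graph $G_r$ and lets $G_b$ be \emph{any} plane $st$-graph containing a single designated edge $(u,v)$. Condition {\bf c2} pins the endpoints of that blue edge to red faces: it must leave $u$ through $\wleft_r(u)$ or $\wright_r(u)$ and enter $v$ through $\wleft_r(v)$ or $\wright_r(v)$, so only the four face-pair choices induced by $\Phi(u)$ and $\Phi(v)$ matter---the rotations of all other vertices are irrelevant, and there is no $4^{|V|}$ explosion to control. The red graph is built so that in each of these four cases the curve representing $(u,v)$ must cross one red edge from left to right \emph{and} another from right to left, directly violating {\bf c3}; Theorem~\ref{th:4polar} then finishes the argument. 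In particular, no region/boundary argument in the style of Lemma~\ref{le:lshapes} is needed, and importing one is counterproductive: your version requires $v \in B_b(u) \cup T_b(u)$, i.e.\ nontrivial blue paths, which would force you to build and embed extra blue structure and argue about it, making the ``robustness'' verification strictly harder than the single-blue-edge obstruction the theorem actually needs.
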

\begin{proof}
Let $G_r$ be the plane $st$-graph drawn red in Fig.~\ref{fi:negative-1} (observe that it is a series-parallel graph,
i.e., a partial $2$-tree). Also, let $G_b$ be any plane $st$-graph containing the blue edge $(u,v)$ in Fig.~\ref{fi:negative-1}. 

\begin{figure}[h]
\centering
\includegraphics[page=1, scale=0.6]{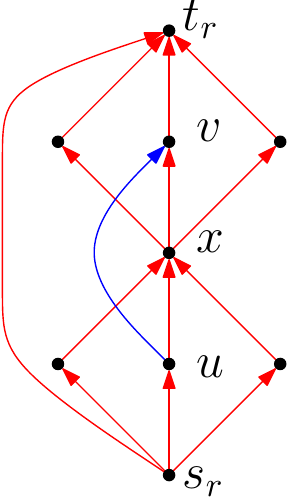}
\caption{
Illustration for the proof of Theorem~\ref{th:negative-fixedemb}. 
}\label{fi:negative-1}
\end{figure}

Edge $(u,v)$ is incident to either $\wright_r(u)$ or $\wleft_r(u)$ and to either $\wright_r(v)$ or $\wleft_r(v)$, based on $\Phi$. 
In any case edge $(u,v)$ will cross a red edge from left to right and a red edge from right to left, which implies that $\langle G_r,G_b \rangle$ 
does not admit any \phit, since condition {\bf c3} cannot be respected. 
It follows that, by Theorem~\ref{th:4polar}, $\langle G_r,G_b \rangle$ does not admit a \LSVR.
\end{proof}

Our testing algorithm exploits the interplay between the primal of the blue (red) graph and the dual of the red (blue) graph.
Given the circular order of the edges around each vertex imposed by the function $\phi$, we aim to compute a suitable path in the red dual for each blue edge. Such paths will then be used to route the blue edges. Finally, we check that no two blue edges cross.


We first introduce a few definitions. 
Let $G$ and $D^{\shortrightarrow}$ be a 
plane $st$-graph and its dual. Let $f$ and $g$ be two faces of $G$ that share an edge $e=(x,z)$ of $G$, such that $e$ belongs to the right (resp., left) path of $f$ (resp., $g$). Let $e^*$ be the dual edge in $D^{\shortrightarrow}$ corresponding to $e$. 
Let $w$ be a vertex on the right path of $f$ (or, equivalently, on the left path of $g$). Then $w$ is \emph{cut from  above} (resp., \emph{below}) by $e^*$, if $w$  precedes $z$ (resp., succeeds $x$) along the right path of $f$, i.e., all vertices that precede $z$ (including $x$) are cut from above, while all vertices that succeed $x$ (including $z$) are cut from below by $e^*$.

Let $G_r$ and $G_b$ be a pair of plane $st$-graphs with the same vertex set $V$
and with distinct sources and sinks. 
Let $\Phi:V \rightarrow \mathcal H=\{\SA,\SB,\SC,\SD\}$.
Recall that, for a given vertex $u$ of $G_{b}$, with the notation
$L_b(u)$, $R_b(u)$, $T_b(u)$ and $B_b(u)$ we represent the set of
vertices to the left, to the right, that are reachable from, and that
can reach $u$ in $G_b$, respectively (see
Section~\ref{se:preliminaries}).
Then consider an edge $e=(u,v)$ of $G_b$ and a path\footnote{Since \reddual is a multigraph, to uniquely
  identify $\pi_e$ we specify the edges that are traversed.
}  $\pi_e =
\{\rface{u} = f_0$, $e^*_{0}$, $f_1$, $\dots$, $f_{k-1}$, $e^*_{k-1}$,
$\rface{v}=f_k\}$ in \reddual, where $f_i$ ($0 \leq i \leq k$) are the faces traversed
by the path, and  $e^*_{i}$ ($0 \leq i < k$) are the dual edges
used by the path to go from $f_i$ to $f_{i+1}$.
Path $\pi_e$ is a \emph{traversing path} for $e$,
if $\pi_e =\{\rface{u} = \rface{v}\}$,  
or for all $0 \leq i < k$ and all vertices $w$ in the right path of
$f_i$: 
\begin{itemize}
\item[{\bf p1.}] If $w \in L_b(u)$ then
  $w$ is cut from below by $e^*_i$. See Fig.~\ref{fi:p1}.
\item[{\bf p2.}] If $w \in R_b(u)$ then
  $w$ is cut from above by $e^*_i$. See Fig.~\ref{fi:p2}.
\item[{\bf p3.}] If $w \in B_b(u)$ and $\Phi(w)=\SA$ (resp.,
  $\Phi(w)=\SD$) then $w$ is cut from above (resp., below) by
  $e^*_i$. See Fig.~\ref{fi:p3}.
\item[{\bf p4.}] If $w \in T_b(u)$ and $\Phi(w)=\SD$ (resp.,
  $\Phi(w)=\SA$) then $w$ is cut from above (resp., below) by
  $e^*_i$. See Fig.~\ref{fi:p4}.
\item[{\bf p5.}] If $w \in B_b(v)$ and $\Phi(w)=\SC$ (resp.,
  $\Phi(w)=\SB$) then $w$ is cut from above (resp., below) by
  $e^*_i$. See Fig.~\ref{fi:p5}.
\item[{\bf p6.}] If $w \in T_b(v)$ and $\Phi(w)=\SB$ (resp.,
  $\Phi(w)=\SC$) then $w$ is cut from above (resp., below) by
  $e^*_i$.  See Fig.~\ref{fi:p6}.
\end{itemize}  

\begin{figure}[t]
\centering
\subfigure[{\bf p1}]{\includegraphics[page=1]{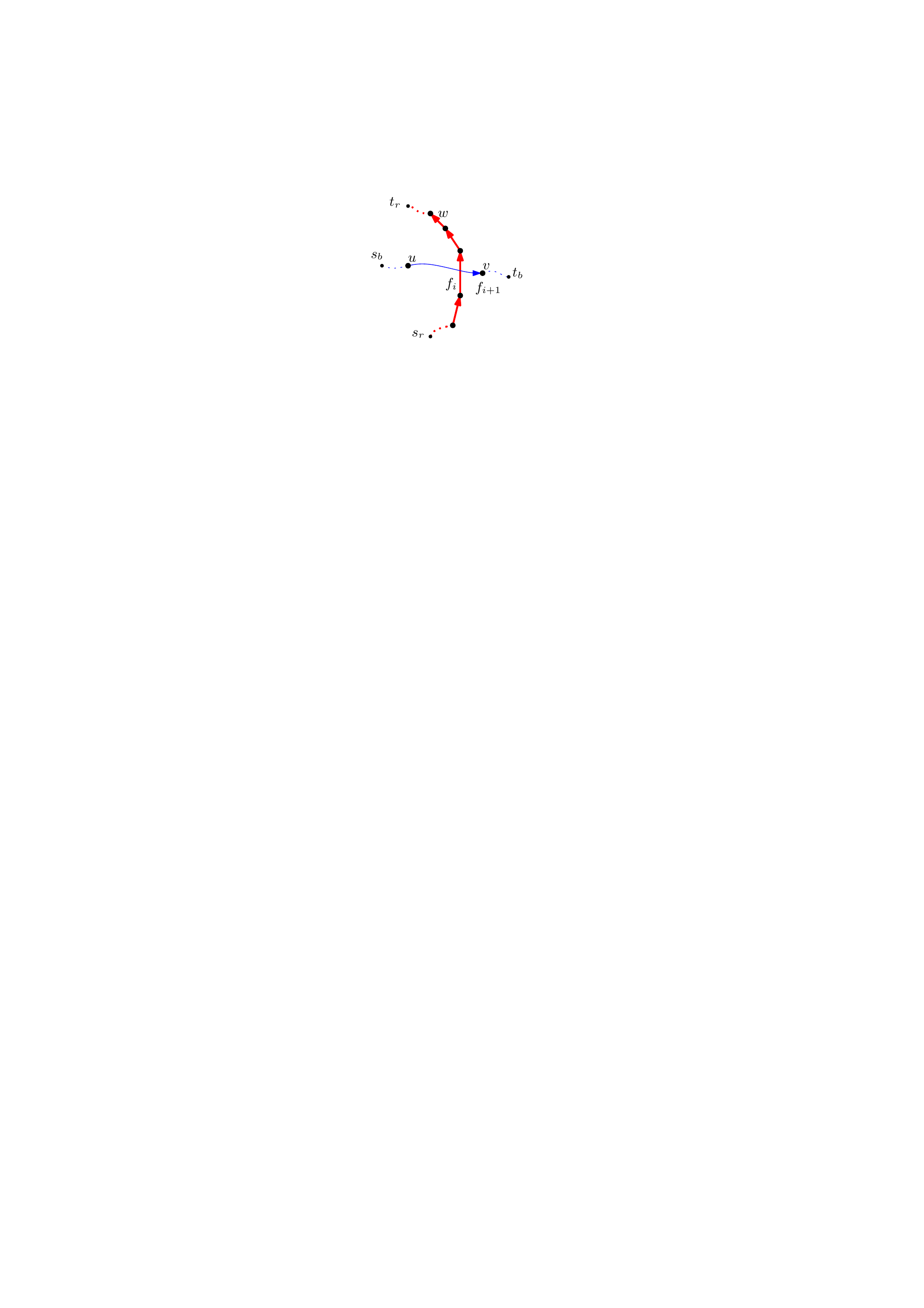}\label{fi:p1}}\hfil
\subfigure[{\bf p2}]{\includegraphics[page=2]{fig/travpath1}\label{fi:p2}}\hfil
\subfigure[{\bf p3}]{\includegraphics[page=3]{fig/travpath1}\label{fi:p3}}\hfil
\subfigure[{\bf p4}]{\includegraphics[page=4]{fig/travpath1}\label{fi:p4}}\hfil
\subfigure[{\bf p5}]{\includegraphics[page=5]{fig/travpath1}\label{fi:p5}}
\subfigure[{\bf p6}]{\includegraphics[page=6]{fig/travpath1}\label{fi:p6}}
\caption{Illustration for the properties of a traversing path $\pi_e$.}
\label{fi:travpath-properties}
\end{figure}

We now show that if $\langle G_r, G_b \rangle$ admits a \phit{}, then for each blue edge (the same argument would apply for red edges) there exists a unique traversing path. 

\begin{lemma}\label{le:uniquetp}
Let $G_r$ and $G_b$ be two 
plane $st$-graphs with the same vertex set $V$
and with distinct sources and sinks.
Let $\Phi:V \rightarrow \mathcal H=\{\SA,\SB,\SC,\SD\}$. If
$\langle G_r, G_b \rangle$
admits a \phit{}, then for every edge $e$ of $G_b$
there is a unique traversing path $\pi_e$ in \reddual.
\end{lemma}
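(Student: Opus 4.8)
The plan is to use the hypothesised \phit{} $G_{rb}$ to produce one canonical traversing path for $e$, and then to argue that properties {\bf p1}--{\bf p6} leave no freedom, so that this path is the only one.

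First I would read a path directly off the drawing. Fix an edge $e=(u,v)$ of $G_b$ and consider the curve representing $e$ in $G_{rb}$. By condition {\bf c2} this curve leaves $u$ through the face $\rface{u}$ and enters $v$ through $\rface{v}$, and by condition {\bf c3} it crosses only red edges, and only from left to right. Hence the sequence of red edges it crosses, read from $u$ to $v$, traces a directed walk $\pi_e$ in \reddual{} from $\rface{u}$ to $\rface{v}$; since \reddual{} is the dual of a plane $st$-graph and is therefore acyclic, no face can repeat and $\pi_e$ is a simple directed path. This is the candidate traversing path, and existence of a traversing path will follow once {\bf p1}--{\bf p6} are checked for it.

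Second, I would verify {\bf p1}--{\bf p6} for $\pi_e$ by a case analysis that reads the cut direction off the drawing. For a red vertex $w$ on the right path of a traversed face $f_i$, being \emph{cut from above} (resp.\ \emph{below}) by $e^*_i$ is exactly the statement that the curve $e$ leaves $f_i$ above (resp.\ below) $w$. On which side of $w$ the curve passes is forced by two facts: the region of the plane in which $w$ lies relative to $u$ (and to $v$), as described by the partition into $B_b,T_b,L_b,R_b$ in Section~\ref{se:preliminaries} and Lemma~\ref{le:tt}, and the rotation $\Phi(w)$, which by {\bf c2} fixes the face $\rface{w}$ and hence the side of $w$ at which its incident blue edges attach. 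Because $e$ may cross no blue edge, in each of the six situations enumerated in {\bf p1}--{\bf p6} the curve is squeezed to one side of $w$, and one checks that this side is precisely the prescribed cut. The case analysis must also confirm the consistency of overlapping cases; for instance a vertex $w\in B_b(u)$ also reaches $v$ and so lies in $B_b(v)$, and the two applicable rules {\bf p3} and {\bf p5} must prescribe the same cut.

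Finally, for uniqueness I would show that {\bf p1}--{\bf p6} determine the path step by step. Starting from $f_0=\rface{u}$, each vertex on the right path of the current face $f_i$ lies in exactly one of the regions of Lemma~\ref{le:tt}, so {\bf p1}--{\bf p6} label it \emph{cut from above} or \emph{cut from below}; along a right path the cut direction induced by a single dual edge is monotone, cutting everything below it from above and everything above it from below. The transversal guarantees that the labels are consistent with such a monotone split into a lower \emph{above}-block and an upper \emph{below}-block, and then the unique red edge separating the two blocks is the only admissible choice of $e^*_i$. This fixes $f_{i+1}$, and by induction the whole path is forced; since any traversing path must make these same forced choices, it coincides with $\pi_e$.

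I expect the case analysis of the second step to be the crux: it is where the interaction between the six regions and the four rotations has to be made precise, and where the non-crossing of blue edges is converted into a statement about which side of each red vertex the curve $e$ must pass. The monotonicity and forcing of the last step, and the simplicity of $\pi_e$ in the first step, should be comparatively routine once that correspondence is nailed down.
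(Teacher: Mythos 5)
Your proposal is correct and follows essentially the same route as the paper's own proof: extract $\pi_e$ from the curve representing $e$ in the \phit{} (invoking acyclicity of \reddual{}), verify \textbf{p1}--\textbf{p6} by a case analysis in which the regions of Lemma~\ref{le:tt}, the rotation $\Phi(w)$, and conditions \textbf{c2}--\textbf{c3} squeeze the curve to the prescribed side of each vertex $w$, and then derive uniqueness from the forced monotone split of each right path into an above-block and a below-block, which pins down each dual edge in turn. A minor remark: the consistency check you flag between \textbf{p3} and \textbf{p5} is vacuous, since they apply to disjoint sets of rotations ($\{\SA,\SD\}$ versus $\{\SB,\SC\}$), so the rules are complementary rather than potentially conflicting.
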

\begin{proof}
If
$\langle G_r, G_b \rangle$
admits a \phit{} $G_{rb}$, then for every edge $e=(u,v)$ of $G_b$ there exists a path $\pi_e = \{\rface{u} = f_0$, $e^*_{0}$, $f_1$, $\dots$, $f_{k-1}$, $e^*_{k-1}$, $\rface{v}=f_k\}$ in \reddual, which is the path used by $e$ to go from $\rface{u}$ to $\rface{v}$ in $G_{rb}$. 

If $f_0$ and $f_k$ coincide, then $\pi_e$ is a traversing path. Otherwise, we would have a cycle $\pi_e=\{f_0=f_k,\dots,f_0=f_k\}$, which is not possible since \reddual~is acyclic, being the dual
of a plane $st$-graph.

If $f_0$ and $f_k$ do not coincide, let $w$ be
a vertex in the right path of $f_i$.
First, if $w$ belongs to $L_b(u)$, then it is cut from below.
Otherwise, if $w$ was cut
from above, since edge $e=(u,v)$ cannot cross the right path of $f_i$
twice (by condition {\bf c3}), it would belong to $R_b(u)$, a
contradiction with the fact that the embedding of $G_b$ is
preserved. Thus {\bf p1} is respected by $\pi_e$. With a symmetric
argument we can also prove {\bf p2}.
Suppose now that $w$  belongs to
$B_b(u)$, then  $\rface{w}=f_i=\wleft_r(w)$, otherwise if
$\rface{w}=f_{i+1}=\wright_r(w)$, the blue path from $w$ to $u$
would violate {\bf c3}. In other words, either $\Phi(w)=\SA$ or
$\Phi(w)=\SD$. Furthermore, if $\Phi(w)=\SA$, then $w$ must be cut
from above, while if $\Phi(w)=\SD$, then $w$ must be cut from below,
as otherwise the incoming blue edges to $w$ must enter a region
delimited by the blue path from $w$ to $u$, the blue edge $(u,v)$, and
part of the (red) right path of $f_i$, which violates the planarity of
the embedding of $G_b$ or condition {\bf c2}
(see Fig.~\ref{fi:p3}). Thus {\bf p3} is respected by $\pi_e$. With similar arguments
one can prove {\bf p4} -- {\bf p6}. Hence, $\pi_e$ is a traversing
set.
To prove that $\pi_e$ is unique, note that any possible traversing
set for $e$ must start from $f_0$ and leave this face. Hence, any vertex $w$ on
the right path of $f_0$ must be cut from either above or below,
according to properties  {\bf p1} -- {\bf p6} (which cover all
possible cases for $w$). The only edge that can satisfy the cut condition 
for all vertices on the right path of $f_0$, is an edge $e^*_0$ whose corresponding red primal edge,
denoted by $(x,z)$, is such that all vertices on the right path of
$f_0$ above $x$ must be cut
from below and all those below $z$ must be cut from above. Clearly, this edge is unique. By repeatedly applying this argument for
each face $f_i$ ($0 \leq i < k$), the traversing path $\pi_e$ is uniquely identified.
\end{proof}

The next theorem concludes the proof of Theorem~\ref{th:mainth}.

\begin{theorem}\label{th:4polar-test}
Let $G_r$ and $G_b$ be two 
plane $st$-graphs with 
the same set of $n$ vertices $V$
and with distinct sources and sinks.
Let $\Phi:V \rightarrow \mathcal H=\{\SA,\SB,\SC,\SD\}$. There exists
an $O(n^3)$-time algorithm to test whether $\langle G_r, G_b \rangle$
admits a \phit{}.
\end{theorem}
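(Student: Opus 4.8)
The plan is to turn Lemma~\ref{le:uniquetp} into a constructive test. If $\langle G_r, G_b\rangle$ admits a \phit{}, every blue edge has a \emph{unique} traversing path in \reddual{}, so it suffices to (i) attempt to build such a path for each blue edge, rejecting if any edge has none, and (ii) certify that the resulting routing of the blue edges is realizable, i.e. that no two blue edges cross. The correctness of the test then splits into an \emph{only if} direction, supplied by Lemma~\ref{le:uniquetp} together with {\bf c3}, and an \emph{if} direction, which asserts that pairwise non-crossing traversing paths assemble into a genuine \phit{}.

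First I would precompute, for the blue graph, the partition of $V$ induced by each vertex, i.e. the four sets $L_b(u)$, $R_b(u)$, $B_b(u)$, $T_b(u)$ for every $u$. By Lemma~\ref{le:tt} applied to $G_b$ and to \bluedual{}, membership of $w$ in one of these sets is decided by reachability in $G_b$ (for $B_b(u)$ and $T_b(u)$) and by reachability in the blue dual (for $L_b(u)$ and $R_b(u)$); a graph search from each of the $O(n)$ vertices and from each of the $O(n)$ faces computes all of these relations, and stores $\Phi(w)$ for each $w$, in $O(n^2)$ time and space.

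Next, for each blue edge $e=(u,v)$ I would construct its candidate traversing path greedily, following the uniqueness argument in the proof of Lemma~\ref{le:uniquetp}. Starting at the face $f_0=\rface{u}$, at the current face $f_i$ I scan its right path and, using the precomputed relations together with $\Phi$, assign to every vertex $w$ on that path the cut direction (above or below) forced by {\bf p1}--{\bf p6}. A valid exit dual edge $e^*_i$ exists only if the vertices forced from above occupy a contiguous lower portion of the right path and those forced from below the complementary upper portion; the red primal edge separating the two portions determines $e^*_i$ uniquely, or, if no such split exists, certifies that $e$ has no traversing path and hence (by Lemma~\ref{le:uniquetp}) that $\langle G_r, G_b\rangle$ admits no \phit{}. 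I iterate until the path reaches $\rface{v}$. Each right path has $O(n)$ vertices and the path visits $O(n)$ distinct faces (\reddual{} is acyclic), so one edge costs $O(n^2)$ and all $O(n)$ blue edges cost $O(n^3)$.

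Finally I would certify the routing. The per-edge conditions {\bf p1}--{\bf p6} are necessary but not sufficient: individually valid traversing paths may still force two blue edges to cross, which {\bf c3} forbids. Hence I would check, face by face of $G_r$, whether the blue segments crossing a face have interleaving endpoints along its boundary; since the total number of (face, blue-segment) incidences is $O(n^2)$ and a single face hosts at most $O(n)$ segments, the pairwise interleaving tests cost $O(n^3)$ in total. The algorithm accepts iff every blue edge admits a traversing path and no two of these paths cross. I expect the main obstacle to lie in the \emph{if} direction of correctness: namely, proving that a family of pairwise non-crossing traversing paths can be assembled into a drawing satisfying {\bf c1}--{\bf c3} while preserving the embeddings of $G_r$ and $G_b$, so that Theorem~\ref{th:4polar} then yields the \LSVR{}. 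Establishing that local cut-consistency plus global non-crossing genuinely certifies a \phit{} is the crux; once it is in place, the path computation and the crossing test are routine and fit comfortably within the $O(n^3)$ budget.
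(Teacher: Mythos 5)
Your algorithm is, step for step, the one in the paper: the same greedy construction of each traversing path (label the vertices on the right path of the current face ``cut from above'' or ``cut from below'' according to \textbf{p1}--\textbf{p6}, accept only if the labels split into a contiguous block of one kind followed by a contiguous block of the other, which pins down the exit dual edge uniquely; reject if no split exists or the outer face is reached), the same use of Lemma~\ref{le:tt} to compute the labels, the same pairwise non-crossing check on the routed blue edges, the same appeal to Lemma~\ref{le:uniquetp} for the reject branches, and the same $O(n^3)$ accounting. Your per-face interleaving test in Phase~2 is an implementation variant of the paper's check (which draws \reddual{} planarly, uses uniqueness to argue that two crossing paths must share a single subpath of faces, and inspects the circular order of the four entry/exit edges around that subpath), and both fit in the cubic budget.

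The one point you leave open --- that acceptance genuinely certifies a \phit{} --- is exactly what the paper supplies, and it needs less machinery than you fear, because the test is set up constructively rather than as a bare feasibility check. Before computing any paths, the paper fixes the rotation system of the candidate graph $G_{rb}$: around each vertex $u$, the blue (resp.\ red) edges keep their circular order from the embedding of $G_b$ (resp.\ $G_r$), and the two monochromatic blocks are interleaved as prescribed by $\Phi(u)$; this hard-wires \textbf{c1} and \textbf{c2}. Then routing a blue edge $e=(u,v)$ along a \emph{directed} path of \reddual{} forces every blue-red crossing to be left-to-right, by the very orientation convention of the dual, so \textbf{c3} holds automatically for red-blue pairs; conditions \textbf{p1}--\textbf{p6} are precisely what is needed for the routed curve to leave $\rface{u}$, enter $\rface{v}$, and keep each blue vertex it passes on the side consistent with the fixed rotations and with the embedding of $G_b$; and Phase~2 rules out blue-blue crossings. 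Hence an accepted instance comes with an explicit witness drawing from which \textbf{c1}--\textbf{c3} can be read off, and Theorem~\ref{th:4polar} then yields the \LSVR{}. In short, your ``crux'' dissolves once the algorithm is phrased as building the embedding (rotations first, then routings) instead of merely checking path conditions; with that reformulation, your proposal coincides with the paper's proof.
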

\begin{proof}
Our testing algorithm aims to compute (if it exists) a \phit{} $G_{rb}$ for
$\langle G_r, G_b \rangle$.
We first fix the circular order of the edges restricted to the blue
edges (resp., red edges) around each vertex $u$ of $G_{rb}$ to satisfy
{\bf c1} and to maintain the planar embedding of $G_b$ (resp.,
$G_r$). We then fix the circular order of the blue edges with respect
to the red edges around each vertex $u$ of $G_{rb}$ to satisfy {\bf
  c2} (i.e., to obey $\Phi(u)$). Then, we first check if for every
blue edge $e$ there exists a traversing path $\pi_e$; if so, we verify
that by routing every blue edge $e$ through $\pi_e$ no two blue edges
cross each other. If this procedure succeeds then
$\langle G_r, G_b \rangle$
admits \phit{} $G_{rb}$, because, by construction, the resulting
embedding of $G_{rb}$ satisfies conditions {\bf c1}, {\bf c2} and {\bf
  c3}. Otherwise, either there exists a blue edge with no traversing
path, or two traversing paths are such that the two corresponding
edges of $G_b$ cross if routed through them. In the first case
$\langle G_r, G_b \rangle$
does not admit a \phit{} by Lemma~\ref{le:uniquetp}. In the second case,
since the traversing paths are unique, condition {\bf c2} cannot be
satisfied, and again $\langle G_r, G_b \rangle$ does not admit a \phit{}. 

The testing algorithm works in two phases as follows.

{\bf Phase 1.} For every edge $e=(u,v) \in E_b$. If $\rface{u} =
\rface{v}$, we have found a traversing path. Otherwise, we label each
vertex on the right path of $\rface{u}$, by $A$ if it must be cut from
above or by $B$ if it must be cut from below, according to properties
{\bf p1} -- {\bf p6}. Then we check if the sequence of labels along
the path is a nonzero number of $A$'s followed by a nonzero number of
$B$'s.
If so, then the dual
edge of the traversing path is the one whose corresponding primal edge
has the two end-vertices with different labels (which is unique). If
this is not the case, then a traversing path for $e$ does not exist. In
the positive case, we add the dual edge we found and the next face we
reach through this edge to $\pi_e$ and we iterate the algorithm until
we reach either $\rface{v}$ or the outer face of \reddual. In the
former case $\pi_e$ is a traversing path for $e$, while in the latter
case, since the edges of the outer face of $G_{rb}$ cannot be crossed by
definition of \phit{}, we have that again no traversing path can be found.

{\bf Phase 2.} We now check that by routing every edge $e \in E_b$
through its corresponding traversing path $\pi_e$, no two of these
edges cross each other.
Consider the dual graph \reddual, which is a 
plane $st$-graph. Construct a planar drawing $\Gamma$ of \reddual. Consider
any two traversing paths $\pi_e$ and $\pi_e'$, which corresponds to
two paths in $\Gamma$, and let $e=(u,v)$ and $e'=(w,z)$ be the two
corresponding  edges of $G_b$. Denote by $\hat \pi_e = \{u\} \cup \pi_e \cup \{v\}$
and $\hat \pi_e' = \{w\} \cup \pi_e' \cup \{z\}$ the two enriched
paths. Enrich $\Gamma$ by adding the four edges $(x,\rface{x})$, where $x \in \{u,v,w,z\}$, 
in a planar way  respecting the original embedding of $G_b$. 
Consider now the subdrawing $\Gamma'$ of $\Gamma$
induced by $\hat \pi_e \cup \hat \pi_e'$. See Fig.~\ref{fi:crossingpaths} for an illustration. 
If $e$ and $e'$ cross each other, then $\pi_e \cap \pi_e'$ cannot be
empty. Moreover, the intersection $\pi_e \cap \pi_e'$ must be a single
subpath, as otherwise the two traversing paths would
not be unique. Let $f$ be the first face and let $g$ be the last face in this subpath. 
Let $e_u$ be the incoming edge of $f$ that belongs to the subpath of $\hat \pi_e$ from $u$ to $f$; 
and let $e_w$ be the incoming edge of $f$ that belongs to the subpath of $\hat \pi_e'$ from $w$ to $f$. 
Also, let $e_v$ be the outgoing edge of $g$ that belongs to the subpath of $\hat \pi_e$ from $g$ to $v$; 
and let $e_z$ be the outgoing edge of $g$ that belongs to the subpath of $\hat \pi_e'$ from $g$ to $z$. 
Then $e$ and $e'$ cross if and only if walking clockwise along $\pi_e \cup \pi_e'$ from $f$ to $g$ and back to $f$ these four edges 
are encountered in the circular order $e_u$, $e_z$, $e_v$, $e_w$ as shown in Fig.~\ref{fi:crossingpaths}. 
Note that, $e_u$ and $e_w$ may coincide if $u=w$, and similarly for $e_v$ and $e_z$.

\begin{figure}[h]
\centering
\includegraphics[scale=0.6]{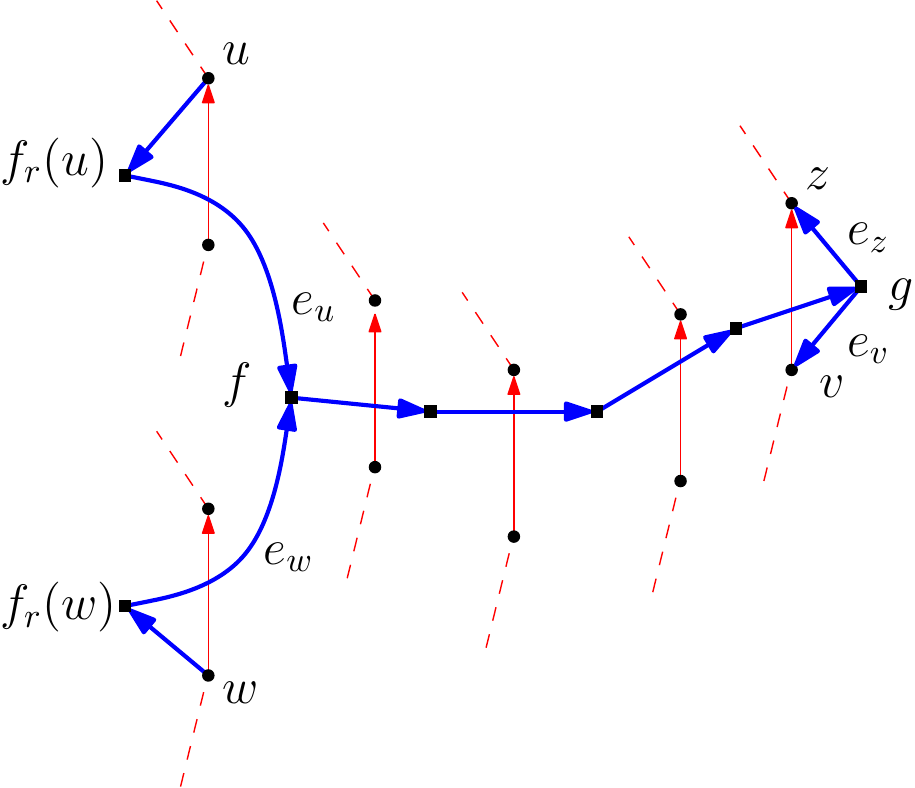}
\caption{Illustration for the proof of Theorem~\ref{th:4polar-test}.}\label{fi:crossingpaths}
\end{figure}

To conclude the proof we need to show that {\bf Phase 1} and {\bf Phase 2} can be implemented in $O(n^3)$ time. 
Checking if a traversing path exists for each edge in $E_{b}$ ({\bf
  Phase 1}) can be performed in $O(n^3)$ time. Namely, for each  of
these edges, we need to assign a label $A$ or $B$ to every vertex $w$
on the right path of the traversed faces. This can be done by checking
the existence of a path in the primal graph $G_b$ or in the dual graph
\bluedual, as by Lemma~\ref{le:tt}, thus requiring $O(n)$ time per
vertex, $O(n^2)$ time per edge, and $O(n^3)$ time in
total. Furthermore, checking if two of these edges would cross if
routed through their traversing paths ({\bf Phase 2}) can also be
performed in $O(n^3)$ time. In fact, constructing a planar drawing of
\reddual~costs $O(n)$ time~\cite{cp-ltadp-IPL95}, while checking if
two paths cross in the drawing costs $O(n)$ time for each pair of
paths, of which there are $O(n^2)$ in total. This concludes the proof.
\end{proof}

\section{Simultaneous RAC Embeddings via Simultaneous Visibility Representations with L-shapes}\label{se:wheel+matching}

In this section, we show that Theorem~\ref{th:mainth} can be used to shed more light on the problem of
computing a simultaneous RAC embedding~\cite{DBLP:journals/jgaa/ArgyriouBKS13,DBLP:conf/walcom/BekosDKW15}. 

Given two planar graphs with the same vertex set, an SRE is a
simultaneous embedding where crossings between edges of the two graphs occur at right angles. 
Argyriou {\em et al.} proved that it is always possible to construct an SRE with straight-line edges 
of a cycle and a matching, while there exist a wheel graph and a cycle that do not admit such a representation~\cite{DBLP:journals/jgaa/ArgyriouBKS13}.  
This motivated recent results about SRE with bends along the edges. 
Namely, Bekos {\em et al.} show that two planar graphs with the same vertex set admit an SRE with at most six bends per edge in both graphs~\cite{DBLP:conf/walcom/BekosDKW15}.

We observe that any pair of graphs that admit a simultaneous
visibility representation with L-shapes also admits an SRE 
with at most two bends per edge. This is obtained 
with the technique used in Section~\ref{se:characterization} to compute a \phit~from a \LSVR, see Fig.~\ref{fi:necessity}. 
Thus, a new approach to characterize graph
pairs that have SREs with at most two bends per
edge is as follows: Given two planar graphs with the same vertex
set, add to each of them a unique source and a unique sink, and look for two $st$-orientations (one for each of the two graphs) and a function $\Phi$ such
that the two graphs admit a \LSVR{}.

As an application of this result, in what follows we show  an alternative proof of another result by Bekos {\em et al.}
that a wheel graph and a matching admit an SRE with at most two bends for each edge 
of the wheel, and no bends for the matching edges~\cite{DBLP:conf/walcom/BekosDKW15}.

\medskip

\begin{figure}[t]
\centering \includegraphics{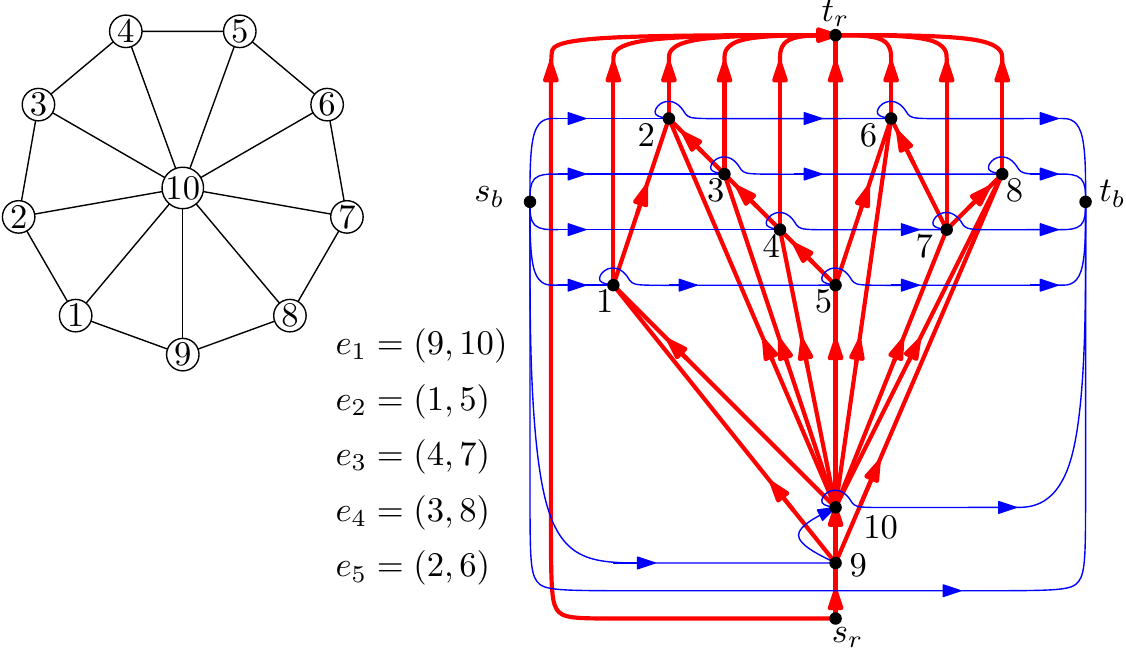}
\caption{The wheel $W_{10}$, edges of a matching $M$, and the \phit~representing both.}
\label{fi:W10M}
\end{figure}

The \emph{wheel} on $n$ vertices, denoted by $W_n$, is an undirected
graph with vertices $V=\{1,2,\dots,n\}$ and edges $\{(i,n) : 1 \leq
i \leq n-1 \}$, called \emph{spokes}; and $\{ (i,i+1) : 1 \leq i
\leq n-2 \} \cup \{(1,n-1)\}$, called \emph{rim edges}. See
Fig.~\ref{fi:W10M} for an illustration. A graph is a \emph{matching} if no two of its edges share a common
end-vertex and a \emph{complete matching} if every vertex is the
end-vertex of some edge.

\medskip

We begin with the following observation, based on Theorem~\ref{th:4polar}.

\newtheorem{observation}{Observation}

\begin{observation}
Let $\langle G_r'=(V,E_r'), G_b'=(V,E_b') \rangle$  be a pair of undirected plane graphs. 
Let $\Phi:V \rightarrow \mathcal H=\{\SA\}$.  
Then $\langle G_r', G_b' \rangle$ admits a \LSVR, if and only if
$G_r'$ and $G_b'$ can be augmented and $st$-oriented to two plane
$st$-graphs $G_r$ and $G_b$, respectively, defined on the same set of
vertices and with distinct sources and sinks, such that $\langle G_r,
G_b \rangle$ admits a \phit.
\end{observation}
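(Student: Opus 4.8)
The plan is to derive the Observation directly from Theorem~\ref{th:4polar}, using the fact that a bar visibility representation carries a canonical orientation. The only real work is to bridge the undirected setting of $G_r',G_b'$ with the directed plane $st$-graph setting required by Theorem~\ref{th:4polar}; the restriction to the single rotation \SA plays no special role other than fixing one rotation at every vertex, so the argument is identical for any fixed $\Phi$.

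First I would prove necessity. Suppose $\langle G_r',G_b'\rangle$ admits a \LSVR{} $\langle\Gamma_r,\Gamma_b\rangle$ in which every vertex is an \SA L-shape. I orient each edge of $G_r'$, a vertical visibility, from its lower to its upper endpoint, and each edge of $G_b'$, a horizontal visibility, from its left to its right endpoint. Since the $y$-coordinate (resp.\ $x$-coordinate) strictly increases along each oriented red (resp.\ blue) edge, both orientations are acyclic. I then augment the drawing with the four extreme bars used in the necessity part of Theorem~\ref{th:4polar}: a bottommost and a topmost horizontal bar $s_r,t_r$, and a leftmost and a rightmost vertical bar $s_b,t_b$. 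The bottommost bar is seen only by the sources of the red orientation and the topmost bar only by its sinks, so adding $s_r,t_r$ (together with the edge $(s_r,t_r)$) turns the red orientation into a plane $st$-graph $G_r=(V\cup\{s_r,t_r\},E_r)$ with a single source and a single sink on the outer face; the construction for $G_b$ is symmetric. By construction $G_r$ and $G_b$ restrict on $V$ to $G_r'$ and $G_b'$, and $s_r,t_r,s_b,t_b$ are pairwise distinct. The starting representation is now a \LSVR{} of the directed pair $\langle G_r,G_b\rangle$, so Theorem~\ref{th:4polar} produces a \phit{} of $\langle G_r,G_b\rangle$.

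Conversely, for sufficiency the argument runs in reverse. Assume $G_r'$ and $G_b'$ are augmented and $st$-oriented to plane $st$-graphs $G_r$ and $G_b$ with distinct sources and sinks such that $\langle G_r,G_b\rangle$ admits a \phit{}. By the sufficiency direction of Theorem~\ref{th:4polar}, $\langle G_r,G_b\rangle$ then admits a \LSVR{} $\langle\Gamma_r,\Gamma_b\rangle$ in which every vertex of $V$ is an \SA L-shape. Deleting the four bars representing $s_r,t_r,s_b,t_b$ and forgetting the edge orientations leaves a drawing in which each vertex of $V$ is still an \SA L-shape, each edge of $G_r'$ is a vertical visibility, each edge of $G_b'$ is a horizontal visibility, and no two L-shapes intersect; this is exactly a \LSVR{} of the undirected pair $\langle G_r',G_b'\rangle$.

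I expect the only delicate point to be the bookkeeping in the necessity direction: one has to verify that orienting the visibilities and inserting the four extreme bars really yields plane $st$-graphs (acyclicity, a unique source and a unique sink, both on the outer boundary) and that this introduces no visibility between two vertices of $V$ beyond the edges already present in $G_r'$ and $G_b'$. This is precisely the standard Tamassia--Tollis correspondence between bar visibility representations and plane $st$-graphs; granting it, both implications reduce to a single invocation of Theorem~\ref{th:4polar}, and neither the forward construction of the orientations nor the backward deletion of the dummy bars interacts with the particular rotation \SA.
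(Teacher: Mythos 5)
Your overall route is the one the paper intends: the Observation is stated without proof precisely because, once the edges are oriented by the drawing (red bottom-to-top, blue left-to-right) and the four extreme dummy bars are inserted, both directions are meant to reduce to Theorem~\ref{th:4polar}. Your sufficiency direction is essentially sound, with one small point left tacit: deleting a bar can in principle create new visibilities, so you should say that in the representation produced by \algo{} the bars of $s_r,t_r,s_b,t_b$ are bottommost/topmost/leftmost/rightmost, hence they cannot lie \emph{between} two L-shapes of $V$, and removing them preserves the ``visibility iff edge'' condition for $G_r'$ and $G_b'$.

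The genuine gap is in the necessity direction, at exactly the step you defer to ``the standard Tamassia--Tollis correspondence.'' That correspondence lives in a world whose only objects are the horizontal bars of one graph; in a \LSVR{} the picture also contains the vertical bars of the other graph's L-shapes, and these are obstacles that block vertical visibility \emph{without} creating red edges. So ``$u$ has no incoming red edge'' does not follow-by-citation imply ``$u$ is visible from the bottommost bar'': a red source could a priori be screened from below by vertical bars of other L-shapes, in which case your augmented red graph has a second source, is not an $st$-graph, and Theorem~\ref{th:4polar} cannot be invoked. What is actually needed is an argument tailored to this setting: vertical bars are degenerate (zero-width) segments, so they alone cannot obstruct every positive-width channel below $u$; red visibility segments are capped above by horizontal bars, and blue visibility segments do not obstruct red ones (red--blue crossings are exactly the allowed rectilinear crossings); hence if $u$ cannot reach the bottom, the topmost horizontal bar below $u$ over some interval of positive width is visible to $u$, forcing an incoming red edge --- a contradiction. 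Note this argument is also model-sensitive: if a visibility is required to have the fixed width $\epsilon$ rather than merely positive width, a source really can hide behind vertical bars with sub-$\epsilon$ gaps, your construction fails on such drawings, and one must either normalize the representation or build the \phit{} directly and route the edge $(s_r,u)$ across blue edges, which condition {\bf c3} permits. The symmetric issue (blue sources/sinks screened by horizontal bars) must be handled as well, and so must the unobstructed channels needed to realize the edges $(s_r,t_r)$ and $(s_b,t_b)$ outside the extreme bars.
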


\medskip

We show that the wheel $W_n$ and any
matching $M$ on the vertices of $W_n$ has a \LSVR~ where $\Phi(v) =
\SA$ for all vertices $v$ in $W_n$.
To do this,
we construct a \phit~ for the pair $\langle G_r, G_b
\rangle$, where $G_r$ and $G_b$ are plane $st$-graphs obtained from $W_n$
and $M$, respectively.

We assume that $n$ is even and $M$ is a complete matching, though we
will later see how to handle partial matchings.
By symmetry, we may assume that vertex $n$ is matched with
vertex $n-1$.
Let $e_1,e_2, \dots, e_{n/2}$ be the edges in $M$ where $e_1 = (n-1,n)$.
Let the \emph{height}, $h(u)$, of a vertex $u$ be the index of its
edge in the matching, so if $u \in e_i$ then $h(u) = i$.

For all $u \in \{1,2,\dots,n-2\}$, use the point $(u, h(u))$ to
represent vertex $u$.
Use the point $(n/2, -n^2/2)$ to represent vertex $n$.
Use the point $(n/2, -n^2/2-1)$ to represent vertex $n-1$.
The drawing of $W_n$ is straight-line planar because
the spokes of the wheel do not intersect the rim edges since the slope
of any rim edge (excluding $(n-2,n-1)$ and $(1,n-1)$)
is in the range $(-n/2,+n/2)$. 

To define $G_r$ (the $st$-graph representing $W_n$), direct every rim
edge $(u,v)$ from smaller to larger height vertex.
If both vertices have the same height (i.e., the edge is part of the
matching), direct the edge from smaller to larger
numbered vertex.
Direct spoke edges $(n,u)$ for $1 \leq u \leq n-2$ from vertex $n$ to
vertex $u$.
Direct spoke edge $(n-1,n)$ from vertex $n-1$ to vertex $n$.
Add the directed edge $(s_r,n-1)$ and the directed edges $(u,t_r)$ for all
$1 \leq u \leq n-1$, where the points $(n/2,-\infty)$ and
$(n/2,+\infty)$ represent $s_r$ and $t_r$ respectively.
Finally, add the edge $(s_r, t_r)$ on the outer face of $G_r$.

The drawing of $M$ is also straight-line planar since its edges are
all horizontal segments with different positive $y$-coordinates,
except for $(n-1,n)$, which is vertical and spans only negative
$y$-coordinates.
To define $G_b$ (the $st$-graph representing $M$),
direct all edges from
smaller to larger numbered end-point, so the edges are directed from
left to right.
Add the directed edge $(s_b,u)$ (resp., $(u,t_b)$) for each vertex $u$
that is the smaller (resp., larger) numbered end-point of an
edge in $M$, where the points $(-\infty,0)$ and
$(+\infty,0)$ represent $s_b$ and $t_b$, respectively.
Finally, add the edge $(s_b, t_b)$ on the outer face of $G_b$.

To ensure condition {\bf c2} and preserve condition {\bf c3} of
Theorem~\ref{th:4polar}, we slightly
modify the drawing of the edges in $G_b$ (i.e., the blue edges).
When drawn as straight-line segments, the head of every blue edge
terminates in the left red face of its destination as desired, but
the tail does not leave from the left red face of its origin, in general.
This is easily fixed by drawing each blue edge (except those incident
to $s_b$) so that it
leaves from the origin's left red face, heading to the left and slightly
above the entering blue edge, and then loops to the right over the
vertex to rejoin its original rightward horizontal path.  See
Fig.~\ref{fi:W10M}.

All red edges (from $W_n$) are directed from bottom to top.
Blue edges (from $M$), except for $(n-1,n)$ which is identical to the
corresponding red edge (from $W_n$), intersect red edges only during
the blue edges' rightward trajectories.
Thus blue edges only cross red edges from left to right.

To represent a partial matching $P$, which is a complete matching $M$ without
some edges, for all edges $(u,v) \in M \setminus P$, we
add edge $(s_b,v)$ and $(u,t_b)$ to $G_b$, routing the first slightly
above and the second slightly below the existing blue path $\langle s_b, u, v,
t_b \rangle$,
and remove the edge $(u,v)$ from $G_b$.

\section{Final Remarks and Open Problems}\label{se:discussion}

In this paper we have introduced and studied the concept of
simultaneous visibility representation with L-shapes of two 
plane $st$-graphs. 
We remark that it is possible to include in our 
theory the case when the vertices can also be drawn
as rectangles. 
Nevertheless, this would not enlarge the class of representable pairs of graphs. 
In fact, for every vertex $v$ drawn as a rectangle $\mathcal R_v$, we can replace $\mathcal R_v$ with any L-shape by keeping only two adjacent sides of $\mathcal R_v$ in the drawing and prolonging the visibilites incident to the removed sides of  $\mathcal R_v$. 
The converse is not true. Indeed, roughly speaking, L-shapes can be nested, whereas rectangles cannot. To give an example, if a vertex $v$ must see a vertex $u$ both vertically and horizontally, this immediately implies that the two corresponding rectangles need to overlap, while two L-shapes could instead be nested. 
Several extensions of the model introduced in
this paper can also be studied, e.g., the case
where every edge is represented by a T-shape, or more generally by a
$+$-shape.

Three questions that stem from this
paper are
whether the time complexity of the testing algorithm in
Section~\ref{se:test} can be improved;
what is the complexity of deciding
if two given plane $st$-graphs admit a \LSVR{} for some function
$\Phi$, which is not part of the input;
and
what is the complexity of deciding
if two undirected graphs admit a \LSVR{} for some function
$\Phi$.

\bibliography{simVG}

\bibliographystyle{abbrv}

\end{document}